\definecolor{DarkGreen}{rgb}{0.1,0.5,0.1}
\definecolor{DarkRed}{rgb}{0.5,0.1,0.1}
\definecolor{DarkBlue}{rgb}{0.1,0.1,0.5}
\definecolor{Gray}{rgb}{0.2,0.2,0.2}
\lstdefinestyle{mystyle}{
    commentstyle=\color{DarkBlue},
    keywordstyle=\color{DarkRed},
    numberstyle=\tiny\color{Gray},
    stringstyle=\color{DarkGreen},
    basicstyle=\footnotesize,
    breakatwhitespace=false,         
    breaklines=true,                 
    captionpos=b,                    
    keepspaces=true,                 
    numbers=left,                    
    numbersep=5pt,                  
    showspaces=false,                
    showstringspaces=false,
    showtabs=false,                  
    tabsize=2
}
\newcommand{\cmark}{\ding{51}\hfill}
\newcommand{\xmark}{\ding{55}\hfill}
\DeclareMathOperator{\CS}{CS}
\DeclareMathOperator{\TS}{TS}
\DeclareMathOperator{\PS}{PS}
\newcommand{\PSGap}{\PS^*_\text{diff}}
\newcommand{\TSGap}{\TS^*_\text{diff}}
\newcommand{\CSGap}{\CS^*_\text{diff}}
\newcommand{\PSRatio}{\PS^*_\text{ratio}}
\newcommand{\TSRatio}{\TS^*_\text{ratio}}
\newcommand{\CSRatio}{\CS^*_\text{ratio}}
\newtheorem{theorem}{Theorem}[section]
\newtheorem{lemma}[theorem]{Lemma}
\newtheorem{proposition}[theorem]{Proposition}
\theoremstyle{definition}
\newtheorem{definition}{Definition}[section]
\theoremstyle{remark}
\newtheorem*{remark}{Remark}
\newenvironment{itm}
{\begin{itemize}[wide,noitemsep,topsep=0pt,parsep=0pt,partopsep=0pt]}
{\end{itemize}}
\newenvironment{enum}
{\begin{enumerate}[wide,noitemsep,topsep=0pt,parsep=0pt,partopsep=0pt]}
{\end{enumerate}}
\title{Regulatory Instruments for Fair Personalized Pricing}
\author[1]{Renzhe Xu}
\author[1]{Xingxuan Zhang}
\author[1*]{Peng Cui}
\author[2]{Bo Li}
\author[1]{Zheyan Shen}
\author[1]{Jiazheng Xu}
\affil[1]{Department of Computer Science and Technology, Tsinghua University, Beijing, China}
\affil[2]{School of Economics and Management, Tsinghua University, Beijing, China}
\affil[ ]{}
\affil[ ]{{\small xrz199721@gmail.com, xingxuanzhang@hotmail.com, cuip@tsinghua.edu.cn, libo@sem.tsinghua.edu.cn, \{shenzy17, xujz18\}@mails.tsinghua.edu.cn}}
\date{}
\begin{document}

\maketitle

\begin{abstract}
    \renewcommand{\thefootnote}{\fnsymbol{footnote}}
    \footnotetext[1]{Corresponding Author}
    Personalized pricing is a business strategy to charge different prices to individual consumers based on their characteristics and behaviors. It has become common practice in many industries nowadays due to the availability of a growing amount of high granular consumer data. The discriminatory nature of personalized pricing has triggered heated debates among policymakers and academics on how to design regulation policies to balance market efficiency and equity. In this paper, we propose two sound policy instruments, \textit{i.e.}, capping the range of the personalized prices or their ratios. We investigate the optimal pricing strategy of a profit-maximizing monopoly under both regulatory constraints and the impact of imposing them on consumer surplus, producer surplus, and social welfare. We theoretically prove that both proposed constraints can help balance consumer surplus and producer surplus at the expense of total surplus for common demand distributions, such as uniform, logistic, and exponential distributions. Experiments on both simulation and real-world datasets demonstrate the correctness of these theoretical results\footnote[2]{https://github.com/windxrz/fair-pricing}. Our findings and insights shed light on regulatory policy design for the increasingly monopolized business in the digital era.
\end{abstract}

\section{Introduction}

Personalized pricing, once considered the idealized construction of economic theories, has become common practice in many industries due to the availability of the increasing amount of consumer data \citep{kallus2021fairness}. 
With the high granular data of consumers' characteristics, companies can precisely assess consumers' willingness to pay and develop pricing strategies appropriately. 
The main concern of personalized pricing is that it transfers value from consumers to shareholders, increasing inequality and inefficiency from a utilitarian standpoint \citep{whitehouse2015big}.
As a result, effective regulatory policies are required to balance the benefits between consumers and companies.

The discriminatory nature of personalized pricing has triggered heated debates among policymakers and academics on designing regulatory policies to balance market efficiency and equity \citep{whitehouse2015big,gee2018fair,Brian2018are,gerlick2020ethical,gillis2020false}. 
Although several legal constraints on antitrust \citep{blair2014antitrust, united2018antitrust}, data privacy \citep{acquisti2015privacy,acquisti2016economics}, and anti-discrimination \citep{audit1997shear,kallus2021fairness} have been proposed, their impact on social welfare, especially on the balance between consumer surplus and producer surplus, remains an open question. 
Recently, \citet{dube2019personalized} have demonstrated that regulatory policies may be harmful to consumers, highlighting the challenges of developing proper policies that can guarantee the benefits of consumers.

In this paper, we study in designing effective policy instruments to balance benefits between consumers and companies.
Similar to \citep{cohen2021price}, we consider the most straightforward scenario where a monopoly sells a single product with fixed marginal cost to different consumers.
In addition, we assume that the monopoly can precisely estimate each consumers' willingness to pay and the purpose of the monopoly is to find a personalized pricing strategy to maximize its revenue while remaining compliant with the policy instruments.
We propose two sound policy instruments and prove their effectiveness in balancing consumer surplus and producer surplus.
These two policies, named $\epsilon$-difference and $\gamma$-ratio constraints, are introduced to regulate the range of personalized prices by constraining the difference and ratio between the maximal price and minimal price, respectively.

\begin{wrapfigure}{R}{0.6\textwidth}
    \centering
    \includegraphics[width=0.94\linewidth]{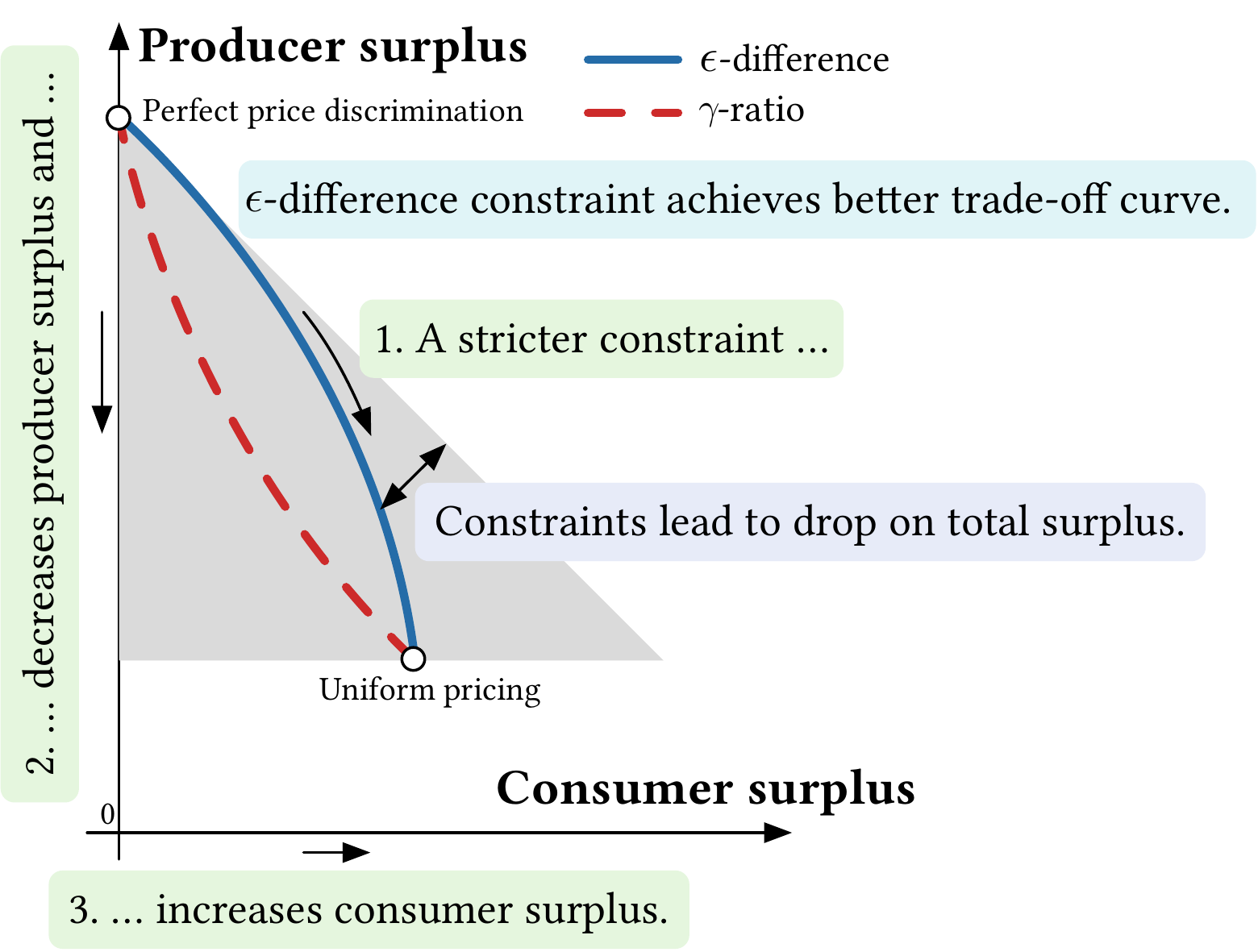}
    \caption{Graphical explanations of our major findings.}
    \figurelabel{fig:showcase}
\end{wrapfigure}

We draw conclusions on typical demand assumptions (strong regularity condition, a variant of the standard regularity \citep{myerson1981optimal}, or monotone hazard rate (MHR) \citep{hartline2013mechanism}) and common demand distributions \citep{besbes2009dynamic,besbes2015surprising,cohen2021price} (including uniform, logistic, exponential and some power law distributions).
The theoretical results are threefold.
(1) Firstly, both constraints can effectively balance the consumer surplus and producer surplus, which means the consumer surplus increases while the producer surplus decreases as the constraints become stricter.
This monotonicity property is satisfied by the full range of regulatory intensity, \textit{i.e.}, from perfect price discrimination (no regulation) \citep{mankiw2014principles} to uniform pricing (the strictest regulation) if the demand distribution is MHR or the regulatory policy is the $\gamma$-ratio constraint.
As for applying the $\epsilon$-difference constraint on strongly regular demand distributions, the property also holds for a large, despite not full, range of regulatory intensity.
(2) Secondly, we compare the trade-off between consumer surplus and producer surplus achieved by the two constraints and show that $\epsilon$-difference constraint outperforms $\gamma$-ratio constraint.
This means that the consumer surplus under $\gamma$-ratio constraint is smaller than that under $\epsilon$-difference constraint if the producer surplus under the two constraints is equal.
(3) Thirdly, imposing either of the constraints will inevitably, to some extent, harm the total surplus. This result is reasonable, given that the efficiency-equity trade-off is largely recognized in practice \citep{mankiw2014principles}.
In addition, the perfect price discrimination achieves the maximal market efficiency and any regulatory policies to avoid it will inevitably harm the total surplus.
Graphical explanations of these findings are shown in \figureref{fig:showcase}. These theories are validated by experiments on both simulation and real-world datasets \citep{wertenbroch2002measuring,slunge2015willingness,phillips2015effectiveness}.

To conclude, for industry practitioners and policymakers, our paper offers the following takeaways.
\begin{enumerate}[leftmargin=*,noitemsep,topsep=0pt,parsep=0pt,partopsep=0pt]
    \item We propose two sound and effective policy instruments on the range of personalized prices, \textit{i.e.}, the difference or ratio between the maximal price and minimal price, and study their impacts on consumer surplus, producer surplus, and social welfare.
    \item For common demand distributions, both constraints can help balance consumer surplus and producer surplus despite the expense of total surplus, which implies that they can protect consumers in the increasingly monopolized business in the digital era.
    \item Comparatively, the $\gamma$-ratio constraint is more suitable for designing policies on the grounds that ratios could be easily adapted to various scenarios. The $\epsilon$-difference constraint has better performance on the trade-off between consumer surplus and producer surplus. As a result, the two constraints could be adopted in different applications in practice.
\end{enumerate}

\section{Related works} \sectionlabel{sect:related-works}
\paragraph{Personalized pricing and price discrimination} With the increasing amount of consumers' data, personalized pricing or price discrimination have become common practice in grocery chains \citep{clifford2012shopper}, department stores \citep{d2017neiman}, airlines \citep{tuttle2013flight}, and many other industries \citep{whitehouse2015big}. The value of personalized pricing for the companies are studied in \citep{barlow1963properties,tamuz2013lower,munoz2017revenue,elmachtoub2021value}. In addition, effective approaches have been developed to achieve personalized pricing, including both online \citep{qiang2016dynamic,javanmard2019dynamic,ban2021personalized} and offline algorithms \citep{chen2021statistical,biggs2021model}.

The price discrimination is often achieved by monopolies, \textit{i.e.}, firms that is the sole seller of a product without close substitutes \citep{mankiw2014principles}. They have absolute market power and consumers have to take the prices offered by them. There are three types of price discrimination \citep{shapiro1998information}. The most extreme case is first-degree price discrimination (or perfect price discrimination equivalently), under which circumstances the prices offered to consumers are exactly their willingness to pay. Second-degree price discrimination occurs when a company charges different prices for various quantities consumed, such as buck discounts. Third-degree price discrimination divides the market into segments and charges a different price to each segment. In this paper, we consider developing policy instruments towards first-degree price discrimination.

\paragraph{Social welfare analysis under personalized pricing}
The benefits earned by consumers, producers, and society in a market are often measured as welfare, \textit{i.e.}, consumer surplus, producer surplus (or revenue equivalently), and total surplus respectively \citep{mankiw2014principles}.

Under perfect price discrimination, consumer surplus is zero while producer surplus is maximized. Several literatures also studied the welfare implications of third-degree price discrimination. \citet{schmalensee1981output} and \citet{varian1985price} noted a necessary condition for third-degree price discrimination to increase social welfare is that output increase. \citet{bergemann2015limits} proved that an intermediary between consumers and companies who knows the distribution of consumers' exact willingness to pay can design market segments to maximize any linear combination of consumer surplus and seller revenue. \citet{cummings2020algorithmic} further studied the theoretical computational efficiency of finding such segmentation. Recently, \citet{dube2019personalized} found that finer-grained personalized pricing in third-degree price discrimination can increase consumer welfare, which is contrary to the common belief that personalized pricing will always harm consumers.

\paragraph{Fair regulation towards personalized pricing}
The discriminatory nature of personalized pricing has triggered heated debate among policymakers and academics on designing fair regulatory policies to restrict price discrimination \citep{whitehouse2015big,gee2018fair,Brian2018are,gerlick2020ethical,gillis2020false}. People have developed legal constraints on antitrust \citep{blair2014antitrust, united2018antitrust}, data privacy \citep{acquisti2015privacy,acquisti2016economics}, and anti-discrimination \citep{audit1997shear,kallus2021fairness}. The former two constraints can help mitigate price discrimination by avoiding the formation of monopolies and precise estimations of consumers' willingness to pay. Anti-discrimination constraints aim to protect different subgroups of consumers such as female and blacks. In this paper, similar to \citep{cohen2021price} we suppose a monopoly with perfect information of consumers' willingness to pay and consider the regulation towards first-degree price discrimination, which differs from settings considered by the legal constraints mentioned above.

Several constraints on pricing algorithms are also considered. \citet{li2016behavior} studied the impact of consumers’ fairness concerns in a duopoly market. \citet{kallus2021fairness,cohen2021price} proposed several fair pricing constraints for anti-discrimination. \citet{biggs2021model} presented a customized, prescriptive tree-based algorithm and the depth of the tree can be considered as a restriction on the segmentation granularity.

\paragraph{Fair machine learning} There are increasing concerns on fairness~\citep{dwork2012fairness,mehrabi2021survey} and robustness~\citep{zhang2021deep,zhang2021domain,shen2021towards} recently. Various fairness notions, including group fairness \citep{hardt2016equality,kearns2018preventing,xu2020algorithmic}, individual fairness \citep{yurochkin2020sensei,dwork2012fairness}, and causality-based fairness notions \citep{kilbertus2017avoiding,kusner2017counterfactual,chiappa2019path} are proposed to protect different subgroups or individuals.
Recently, several works \citep{heidari2019fairness,hu2020fair,rolf2020balancing,kasy2021fairness} have connected fairness with welfare analysis in allocating decisions.
A thorough survey on fair machine learning can be found in \citep{mehrabi2021survey}.

\section{Preliminaries} \sectionlabel{sect:preliminaries}

\subsection{Notations}
We consider a single-period setting where a monopoly offers a single product, with fixed marginal cost $c \ge 0$ to different consumers. Let $V$ denote the consumers' willingness to pay. $V$ is supported on $[0, U]$ ($U$ could be $\infty$, and in this case $V$ is supported on $[0, \infty)$) and is drawn independently from a distribution $F$, called demand distribution. Let $f(v)$ be the probability density function, $F(v)$ be the cumulative density function, \textit{i.e.}, $F(v) = \mathbb{P}[V \le v]$. In addition, we introduce the survival function $S(v) \triangleq 1 - F(v)$ and the hazard rate function $h(v) \triangleq f(v)/S(v)$. $S(v)$ and $f(v)$ can be expressed with $h(v)$ as
\begin{equation}
    S(v) = \exp \left(-\int_0^v h(t)\mathrm{d}t\right), \quad f(v) = h(v)\exp \left(-\int_0^v h(t)\mathrm{d}t\right).
\end{equation}

We suppose the monopoly could precisely estimate consumers' willingness to pay and make personalized prices accordingly. The pricing function is defined as $p: [0,U] \rightarrow [0, U]$, which means a consumer with willingness to pay $V$ is charged with price $p(V)$. We assume that the monopoly has enough supply to fulfill all the demand. In addition, a consumer with willingness to pay $V$ buys the product only if $V$ is at least the offered price $p(V)$, \textit{i.e.}, the quantity demanded of the consumer is $\mathbb{I}[p(V) \le V]$. Therefore, $S(v)$ can be considered as the overall quantity demanded function at price $v$.

Under pricing strategy $p$, the benefits of monopolies, consumers, and society as a whole are measured by the producer surplus $\PS(p)$ (or revenue equivalently), consumer surplus $\CS(p)$, and total surplus $\TS(p)$ respectively and are given by
\begin{equation}
    \left\{
    \begin{aligned}
        \PS(p) & = \mathbb{E}[\mathbb{I}[p(V) \le V](p(V) - c)], \\
        \CS(p) & = \mathbb{E}[\mathbb{I}[p(V) \le V](V - p(V))], \\
        \TS(p) & = \mathbb{E}[\mathbb{I}[p(V) \le V](V - c)].
    \end{aligned}
    \right.
\end{equation}

\subsection{Assumptions on willingness to pay}
To model the distribution of consumers' willingness to pay, We adopt regular and monotone hazard rate distributions from auction theory and revenue management \citep{myerson1981optimal,yan2011mechanism,babaioff2012dynamic,hartline2013mechanism}. For a thorough analysis on social welfare, we slightly strengthen the assumption on regularity as follows.

\begin{definition} [$k$-strongly regular distribution] \definitionlabel{defn:strongly-regular}
    We say $F$ is a $k$-strongly regular distribution if
    \begin{enumerate}
        \item $F(\cdot)$ is twice differentiable, and
        \item the function $w(v) \triangleq v - S(v)/f(v)$ is a monotone strictly increasing function, and
        \item $\lim_{v \rightarrow U} w(v) > k$.
    \end{enumerate}
    Furthermore, if $F$ is $k$-strongly regular for any $k>0$, we say $F$ is $\infty$-strongly regular.
\end{definition}

\begin{remark}
    Compared with standard regular distributions \citep{myerson1981optimal}, $k$-strongly regular distributions differ in two aspects. Firstly, we assume the strict monotonicity of $w(v)$ here while the standard one assumes the non-decreasing property. Secondly, we further assume the lower bound of the limit of $w(v)$ when $v \rightarrow U$. We add these two additional assumptions to guarantee the existence and uniqueness of the optimal pricing strategy under our proposed regulatory policies.
    % They are mild because common regular distributions, including uniform, exponential, logistic, and some power law distributions, are also strongly regular as shown in \tableref{tab:common-distributions}.
\end{remark}

In addition, monotone hazard rate distributions are defined as follows.
\begin{definition} [Monotone hazard rate (MHR) distribution] \definitionlabel{defn:MHR}
    We say $F$ is a monotone hazard rate distribution if $F(\cdot)$ is twice differentiable and the hazard rate $h(v) = f(v)/S(v)$ is non-decreasing.
\end{definition}

The monotone hazard rate assumption is stronger than the strongly regular assumption, given by the following proposition.

\begin{proposition} \propositionlabel{prop:mhr-and-regular}
    Suppose $F$ is supported on $[0, U]$ and is a monotone hazard rate distribution. Then $\forall k < U$, F is also a $k$-strongly regular distribution.
\end{proposition}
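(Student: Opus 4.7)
The plan is to verify directly the three clauses of Definition~\ref{defn:strongly-regular} under the MHR hypothesis of Definition~\ref{defn:MHR}. Clause (1), twice differentiability, is inherited verbatim from the definition of MHR.

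For clause (2), I would rewrite $w(v) = v - S(v)/f(v) = v - 1/h(v)$. Because $h$ is non-decreasing, $1/h(v)$ is non-increasing, so $-1/h(v)$ is non-decreasing; adding the strictly increasing function $v \mapsto v$ then forces $w$ to be strictly increasing on its domain.

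The bulk of the work lies in clause (3): I would in fact establish the stronger statement $\lim_{v\to U} w(v) = U$, which delivers $w(v) > k$ for every $k < U$ at once. I would split on whether $U$ is finite. When $U < \infty$, the fact that $F$ is supported on $[0, U]$ forces $S(U) = 0$; combined with the identity $S(v) = \exp\!\left(-\int_0^v h(t)\,\mathrm{d}t\right)$ this yields $\int_0^U h(t)\,\mathrm{d}t = \infty$. But a non-decreasing function that is bounded on a bounded interval has finite integral, so $h$ cannot be bounded, and by monotonicity $h(v) \to \infty$ as $v \to U$. Hence $1/h(v) \to 0$ and $w(v) \to U$. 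When $U = \infty$, the hazard rate cannot be identically zero (otherwise $S \equiv 1$ and $F$ is not a probability distribution), so by monotonicity there exists $v_0$ with $h(v_0) > 0$ and $1/h(v) \le 1/h(v_0)$ for all $v \ge v_0$; therefore $w(v) = v - 1/h(v) \to \infty$.

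The main obstacle is the bounded-support case of clause (3): one has to exploit the somewhat subtle equivalence, for a non-decreasing function on $[0, U)$ with $U < \infty$, between divergence of its integral at $U$ and genuine blow-up at the right endpoint. Once this is in hand, the remainder of the argument is a straightforward algebraic manipulation of $w = v - 1/h$ together with the monotonicity of $h$.
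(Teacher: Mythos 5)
Your proposal is correct and takes essentially the same route as the paper's proof: clause (2) follows from writing $w(v) = v - 1/h(v)$ with $h$ non-decreasing, and clause (3) is handled by a case split on $U$ finite versus $U = \infty$, concluding $\lim_{v\to U} w(v) = U$ (resp.\ $\infty$). Your handling of the limit is in fact slightly more careful than the paper's (deriving $h(v)\to\infty$ at a finite endpoint from divergence of $\int_0^U h(t)\,\mathrm{d}t$ rather than evaluating $S(U)/f(U)$ directly, and using some $v_0$ with $h(v_0)>0$ instead of $h(0)$ when $U=\infty$), but the underlying argument is the same.
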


Both strongly regular and monotone hazard rate assumptions are mild and common demand distributions \citep{besbes2009dynamic,besbes2015surprising,cohen2021price}, such as uniform, exponential, logistic distributions, are satisfied by both of them. In addition, some power law distributions are strongly regular, despite not MHR. See \tableref{tab:common-distributions} for detailed properties of these distributions.

\begin{table}[t]
    \caption{Detailed properties of several distributions. Common demand distributions, including uniform, exponential, and logistic distributions are both monotone hazard rate (MHR) and strongly regular. Some power law distributions are also strongly regular, despite not MHR.}
    \begin{adjustbox}{max width=\textwidth}
        \begin{threeparttable}
            \begin{tabular}{c|cccc}
                \toprule
                Distribution & Uniform & Exponential & Logistic & Power law \tnote{*}\\
                \midrule
                Support and parameters & $[0, a]$. $a > 0$ & $[0, \infty)$. $\lambda > 0$ & $[0, \infty)$. $s > 0$, $\mu$ & $[0, \infty)$. $\Delta > 0$, $\alpha > 0$ \\
                % \midrule
                Probability density function & $\frac{1}{a}$ & $\lambda \mathrm{e}^{-\lambda v}$ & $\frac{\mathrm{e}^{-(v - \mu) / s}}{s\left(1+\mathrm{e}^{-(v-\mu)/s}\right)^2}$ & $\alpha \Delta^{\alpha}(v+\Delta)^{-(\alpha+1)}$ \\
                Hazard rate function & $\frac{a}{(v-a)^2}$  & $\lambda$ & $\frac{1}{s\left(1+\mathrm{e}^{-(v-\mu)/s}\right)}$ & $\frac{\alpha}{v + \Delta}$ \\
                \midrule
                MHR? & \cmark & \cmark & \cmark & \xmark \\
                Strongly regular? & \cmark, $\forall k<a$, $k$-strongly regular & \cmark, $\infty$-strongly regular & \cmark, $\infty$-strongly regular & \cmark, $\infty$-strongly regular if $\alpha > 1$ \\
                \bottomrule
            \end{tabular}
            \begin{tablenotes}
                \item[*] To ensure the support of power law distribution is $[0, \infty)$, here we adopt the power law + shortscale distribution as shown in \citep{zang2018learning}.
            \end{tablenotes}
        \end{threeparttable}
    \end{adjustbox}
    \tablelabel{tab:common-distributions}
\end{table}

\section{Proposed policy instruments and their impacts on social welfare} \sectionlabel{sect:impacts}
In this section, we propose two sound policy instruments and discuss the impacts of them on social welfare. The policy instruments are regulatory constraints on the range of personalized prices as shown in \sectionref{sect:constraints}, namely the $\epsilon$-difference and $\gamma$-ratio constraints. Then we prove the existence and uniqueness of the optimal pricing strategy under either of the constraints in \sectionref{sect:existence-and-uniqueness}. Afterward, we prove that both constraints can help balance the consumer surplus and producer surplus (\sectionref{sect:balancing}) at the expense of total surplus (\sectionref{sect:total-surplus}). We compare the trade-off between consumer surplus and producer surplus under the two constraints in \sectionref{sect:gap-vs-ratio}. To simplify the proofs, we suppose the marginal cost $c$ is zero in sections listed above. But the results could be applied to general settings when $c > 0$, which is shown in \sectionref{sect:marginal-cost-greater}.

\subsection{Proposed policy instruments} \sectionlabel{sect:constraints}
Without any constraint, the monopoly could charge each consumer with his or her willingness to pay exactly, which is well known as perfect price discrimination or first-degree price discrimination \citep{shapiro1998information}. In this case, consumers get no benefits and the revenue is maximized. Now we consider constraining the maximal price difference and ratio for the pricing strategy $p$.

\begin{tcolorbox}[boxsep=3pt,left=2pt,right=2pt,top=0pt,bottom=0pt]
    \begin{definition}[$\epsilon$-difference fair]
        $\forall 0 \le \epsilon < U$, we say pricing strategy $p$ is $\epsilon$-difference fair if the maximal price difference is no more than $\epsilon$, \textit{i.e.},
        \begin{equation}
            \max_v p(v) - \min_v p(v) \le \epsilon.
        \end{equation}
    \end{definition}
\end{tcolorbox}

\begin{tcolorbox}[boxsep=3pt,left=2pt,right=2pt,top=0pt,bottom=0pt]
    \begin{definition}[$\gamma$-ratio fair]
        $\forall \gamma \ge 1$, we say pricing strategy $p$ is $\gamma$-ratio fair if
        \begin{equation}
            \max_v (p(v) - c) \le \gamma \cdot \min_v (p(v) - c).
        \end{equation}
    \end{definition}
\end{tcolorbox}

\begin{remark}
    We subtract the price with the marginal cost here for normalization. The ratio constraint is well defined because the minimal price must be greater than the marginal cost from the producer's perspective. After the subtraction, the effective range of $\gamma$ is scale-free and is always $[1, \infty)$. By contrast, the range of $\epsilon$ depends on the support of the underlying demand distribution. Therefore, the setting of $\gamma$ is more generic in different applications.
\end{remark}

Let $p_l = \min_v p(v)$ be the upper price and $p_u = \max_v p(v)$ be the lower price. To maximize the revenue, $p_l$ must be greater than marginal cost $c$ and the pricing strategy must be
\begin{equation}
    p(v) =
    \begin{cases}
        p_u, & \text{if } v \ge p_u, \\
        v, & \text{if } p_l \le v < p_u, \\
        p_l, & \text{otherwise}.
    \end{cases}
\end{equation}
Hence, the optimal pricing strategy can be determined by the lower price $p_l$ and the upper price $p_u$. The corresponding producer surplus, consumer surplus, and total surplus can be written as functions of $p_l$ and $p_u$.
\begin{equation}
    \left\{
    \begin{aligned}
        \PS(p_l, p_u) & = (p_u - c)S(p_u) + \int_{p_l}^{p_u} (v - c)f(v)\mathrm{d}v, \\
        \CS(p_l, p_u) & = \int_{p_u}^{U} (v-p_u)f(v)\mathrm{d}v, \\
        \TS(p_l, p_u) & = \int_{p_l}^{U} (v - c)f(v) \mathrm{d}v.
    \end{aligned}
    \right.
\end{equation}
As a result, the optimal pricing strategy can be formulated.
\begin{itemize}[wide,labelindent=0pt,noitemsep,topsep=0pt,parsep=0pt,partopsep=0pt]
    \item The optimal $\boldsymbol{\epsilon}$\textbf{-difference fair} pricing strategy can be given as
    \begin{equation} \equationlabel{eq:gap-constraint}
        p_l^*(\epsilon), p_u^*(\epsilon) = \arg \max_{p_l, p_u} \, \PS(p_l, p_u), \quad \text{s.t.} \quad p_u - p_l \le \epsilon.
    \end{equation}
    \item The optimal $\boldsymbol{\gamma}$\textbf{-ratio fair} pricing strategy can be given as
    \begin{equation} \equationlabel{eq:ratio-constraint}
        q_l^*(\gamma), q_u^*(\gamma) = \arg \max_{q_l, q_u} \, \PS(q_l, q_u), \quad \text{s.t.} \quad  \frac{q_u - c}{q_l - c} \le \gamma.
    \end{equation}
\end{itemize}

The producer surplus, consumer surplus, and total surplus under the optimal $\epsilon$-difference fair pricing strategy are given as follows:
\begin{equation}
    \PSGap(\epsilon) \triangleq \PS(p_l^*(\epsilon), p_u^*(\epsilon)), \quad \CSGap(\epsilon) \triangleq \CS(p_l^*(\epsilon), p_u^*(\epsilon)), \quad \TSGap(\epsilon) \triangleq \TS(p_l^*(\epsilon), p_u^*(\epsilon)).
\end{equation}
Similarly, the surpluses under the optimal $\gamma$-ratio fair pricing strategy are given as:
\begin{equation}
    \PSRatio(\gamma) \triangleq \PS(q_l^*(\gamma), q_u^*(\gamma)), \quad \CSRatio(\gamma) \triangleq \CS(q_l^*(\gamma), q_u^*(\gamma)), \quad \TSRatio(\gamma) \triangleq \TS(q_l^*(\gamma), q_u^*(\gamma)).
\end{equation}

\subsection{Existence and uniqueness of the optimal pricing strategy} \sectionlabel{sect:existence-and-uniqueness}
In this subsection, we show the existence and uniqueness of the optimal pricing strategy under either of the constraints.
\subsubsection{\texorpdfstring{$\epsilon$}{Epsilon}-difference fair} \sectionlabel{sect:gap}
\begin{proposition} \propositionlabel{prop:gap-existence}
    When $c=0$, if $F$ is a $c$-strongly regular distribution, then the solution $(p_l^*(\epsilon), p_u^*(\epsilon))$ to \equationref{eq:gap-constraint} exists and is unique. In addition, $p_l^*(\epsilon)$ and $p_u^*(\epsilon)$ are differentiable.
\end{proposition}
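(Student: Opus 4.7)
The plan is to reduce the two-dimensional maximization in \equationref{eq:gap-constraint} to a one-dimensional problem along the active constraint, obtain existence by compactness (or asymptotic decay), use strong regularity to pin down a unique critical point, and apply the implicit function theorem for differentiability. The hardest step will be uniqueness, because the reduced objective need not be globally concave, so one must instead verify that every critical point is a strict local maximum.

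First I would show that the constraint binds at any maximizer. Since $\partial_{p_u}\PS = S(p_u) > 0$ for $p_u < U$ and $\partial_{p_l}\PS = -p_l f(p_l) \le 0$ (with $c = 0$), the unconstrained gradient always pushes $p_u$ up and $p_l$ down, so $\epsilon < U$ forces $p_u = p_l + \epsilon$ at any optimum. Integration by parts then yields the reformulation
\begin{equation*}
g(p_l) \;\triangleq\; \PS(p_l, p_l + \epsilon) \;=\; p_l S(p_l) + \int_{p_l}^{p_l + \epsilon} S(v)\,\mathrm{d}v, \qquad p_l \in [0, U - \epsilon].
\end{equation*}
Continuity of $g$ and compactness of the domain give existence when $U < \infty$; for $U = \infty$, the hypothesis $\lim_{v \to \infty} w(v) > 0$ controls the tail of $S$ (forcing $vh(v) > 1$ eventually and hence $g'(p_l)$ eventually negative), so the argument reduces to a compact subinterval.

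For uniqueness I would analyze the first-order condition $g'(p_l) = S(p_l + \epsilon) - p_l f(p_l) = 0$. Since $g'(0) = S(\epsilon) > 0$ and $g'$ is eventually negative, a critical point $p_l^*$ exists by the intermediate value theorem. At any such critical point, $p_l^* f(p_l^*) = S(p_l^* + \epsilon) < S(p_l^*)$ forces $w(p_l^*) < 0$, equivalently the identity $p_l^* h(p_l^*) = S(p_l^* + \epsilon)/S(p_l^*) < 1$. Combining this with the strong-regularity inequality $S(p)\,f'(p)/f(p)^2 > -2$ and with $S(p_l^* + \epsilon) \le S(p_l^*)$, a case analysis on the sign of $f'(p_l^*)$ yields
\begin{equation*}
g''(p_l^*) \;=\; -f(p_l^* + \epsilon) - f(p_l^*) - \tfrac{S(p_l^* + \epsilon)}{f(p_l^*)}\, f'(p_l^*) \;<\; 0,
\end{equation*}
so every critical point is a strict local maximum. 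Two distinct maximizers would necessarily sandwich an interior local minimum, contradicting strict local maximality everywhere; hence $p_l^*(\epsilon)$, and therefore $p_u^*(\epsilon) = p_l^*(\epsilon) + \epsilon$, is unique.

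Differentiability of $p_l^*(\epsilon)$ and $p_u^*(\epsilon)$ follows from the implicit function theorem applied to $g'(p_l, \epsilon) = 0$, since $\partial_{p_l} g'(p_l^*, \epsilon) = g''(p_l^*) \ne 0$ by the above. The delicate step throughout is the sign check for $g''(p_l^*)$ in the subcase $f'(p_l^*) < 0$ with $S(p_l^* + \epsilon)/S(p_l^*) > 1/2$: here the strong-regularity bound alone is insufficient, and closing the estimate requires exploiting the critical-point identity $p_l^* h(p_l^*) < 1$ to keep the term involving $f'(p_l^*)$ under control.
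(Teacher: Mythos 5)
Your overall architecture is the same as the paper's (the constraint binds, the problem reduces to maximizing $\PS(p_l,p_l+\epsilon)$ over $p_l$, and differentiability comes from the implicit function theorem), but there is a genuine gap at precisely the step you flag as delicate, and the ingredients you list cannot close it. Write $\rho = S(p_l^*+\epsilon)/S(p_l^*)$ and $\eta = h(p_l^*+\epsilon)/h(p_l^*)$. After applying your local bound $f'(p_l^*) \ge -2f^2(p_l^*)/S(p_l^*)$, the sign check $g''(p_l^*)<0$ reduces to showing $\eta\rho + 1 - 2\rho > 0$. Your three facts --- the local strong-regularity bound at $p_l^*$, $S(p_l^*+\epsilon)\le S(p_l^*)$, and the critical-point identity $p_l^*h(p_l^*)=\rho<1$ --- impose no lower bound on $\eta$ at all, so in your hard subcase $\rho>1/2$ they are consistent with the inequality failing (e.g.\ $\rho=0.9$, $\eta=0.05$ satisfies everything you invoke). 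What excludes such configurations is strong regularity used over the whole interval $[p_l^*,p_l^*+\epsilon]$: monotonicity of $w$ gives the pointwise bound $h(t) > 1/\bigl(t-p_l^*+1/h(p_l^*)\bigr)$, and integrating it yields $\rho < 1/\bigl(1+\epsilon h(p_l^*)\bigr)$ together with $\eta > 1/\bigl(1+\epsilon h(p_l^*)\bigr)$, after which $\eta\rho+1-2\rho>0$ follows from $x+1/x\ge 2$. This interval estimate is exactly the content of the paper's \lemmaref{lemma:gap-strong-regular} (strict monotonicity of $r_\epsilon(v)=v-\epsilon-S(v)/f(v-\epsilon)$); the identity $p_l^*h(p_l^*)<1$ alone cannot substitute for it --- the inequality you would actually need, $p_l^*h(p_l^*)<1/\bigl(1+\epsilon h(p_l^*)\bigr)$, is itself derived from the same interval argument in the paper's proof of \theoremref{thrm:gap-consumer-under-regular}.

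Once that lemma is in hand, the paper's route is also structurally simpler than verifying second-order conditions only at critical points: since $G_l(p_l,\epsilon)=S(p_l+\epsilon)-p_lf(p_l)=-f(p_l)\,r_\epsilon(p_l+\epsilon)$, the first-order function is strictly single-crossing in $p_l$ (positive at $p_l=0$, negative as $p_l\to U-\epsilon$ because $\lim_{v\to U}r_\epsilon(v)>0$), which delivers existence, uniqueness, and the nonvanishing $\partial G_l/\partial p_l$ required for the implicit function theorem in one stroke, making your topological ``two strict maxima sandwich a minimum'' argument unnecessary. Your remaining steps (binding constraint, compactness or tail-decay existence argument, IFT) are fine; a minor quibble is that strict monotonicity of $w$ only gives $S(v)f'(v)/f^2(v)\ge -2$, not the strict inequality you wrote.
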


\subsubsection{\texorpdfstring{$\gamma$}{Gamma}-ratio fair} \sectionlabel{sect:ratio}
\begin{proposition} \propositionlabel{prop:ratio-existence}
    When $c=0$, if $F$ is a $c$-strongly regular distribution, then the solution $(q_l^*(\gamma), q_u^*(\gamma))$ to \equationref{eq:ratio-constraint} exists and is unique. In addition, $q_l^*(\gamma)$ and $q_u^*(\gamma)$ are differentiable.
\end{proposition}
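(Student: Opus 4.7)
The plan is to reduce the two-variable constrained problem in \equationref{eq:ratio-constraint} to a one-dimensional optimization by showing the ratio constraint is active, and then analyze the reduced problem under the strong-regularity assumption. The overall structure mirrors the proof of \propositionref{prop:gap-existence}, with the multiplicative substitution $q_u = \gamma q_l$ playing the role of the additive shift $p_u = p_l + \epsilon$.

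First I would argue the ratio constraint is binding at any optimum. From $\PS(q_l, q_u) = q_u S(q_u) + \int_{q_l}^{q_u} v f(v)\,dv$, differentiation yields $\partial\PS/\partial q_l = -q_l f(q_l) < 0$ and $\partial\PS/\partial q_u = S(q_u) > 0$ on the interior of the feasible set, so at any slack point the revenue strictly improves by decreasing $q_l$; combined with the fact that $q_l = 0$ forces zero revenue, this gives $q_u^* = \gamma q_l^*$ and $q_l^* > 0$. After substituting $q_u = \gamma q_l$ and integrating by parts, the reduced objective is
\[
g(q_l) \;=\; q_l S(q_l) \;+\; \int_{q_l}^{\gamma q_l} S(v)\,dv,
\qquad
g'(q_l) \;=\; \gamma S(\gamma q_l) \;-\; q_l f(q_l),
\]
so any optimizer satisfies the first-order condition $\gamma S(\gamma q_l) = q_l f(q_l)$. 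Existence of a critical point follows from $g'(0) = \gamma > 0$ together with $g'$ eventually non-positive: immediately when $U < \infty$ since $S(U) = 0$, and when $U = \infty$ via the limit $\lim_{v\to U} w(v) > 0$ in the $c$-strong regularity definition (here $c = 0$), which yields $h(v) > 1/(v - \delta)$ for some $\delta > 0$ and $v$ large, so that $q_l f(q_l) > \gamma S(\gamma q_l)$ for $q_l$ sufficiently large.

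Uniqueness is the main obstacle. The FOC couples $S$ at $\gamma q_l$ with $f$ at $q_l$, so one cannot directly quote strict monotonicity of the virtual-value function $w(v) = v - S(v)/f(v)$. My plan is to show every critical point $q_l^*$ is a strict local maximum by examining
\[
g''(q_l^*) \;=\; -\bigl[\gamma^2 f(\gamma q_l^*) \;+\; f(q_l^*) \;+\; q_l^* f'(q_l^*)\bigr],
\]
substituting $q_l^* = \gamma S(\gamma q_l^*)/f(q_l^*)$ from the FOC to recast the $q_l^* f'(q_l^*)$ term, and then invoking the differential form of strong regularity at $q_l^*$, namely $2 f(q_l^*)^2 + S(q_l^*) f'(q_l^*) > 0$, in conjunction with the non-negative extra term $\gamma^2 f(\gamma q_l^*)$. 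This reproduces the classical Myerson argument when $\gamma = 1$. With $g''(q_l^*) < 0$ at every critical point, two critical points would sandwich a local minimum where $g'' \ge 0$, a contradiction, and uniqueness follows. Differentiability of $q_l^*(\gamma)$, and hence of $q_u^*(\gamma) = \gamma q_l^*(\gamma)$, then follows from the implicit function theorem applied to $\gamma S(\gamma q_l) - q_l f(q_l) = 0$ around the unique critical point.
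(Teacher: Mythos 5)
Your skeleton matches the paper's: the constraint binds, the problem reduces to maximizing $g(q_l)=\PS(q_l,\gamma q_l)$ with first-order condition $\gamma S(\gamma q_l)-q_l f(q_l)=0$, existence follows from a sign change of $g'$, and differentiability comes from the implicit function theorem. The existence part of your argument is sound (your integration by parts and the tail estimate from $\lim_{v\to U}w(v)>0$ both check out). The problem is exactly where you flagged it: uniqueness, and the same quantity reappears in the implicit-function-theorem step, so the gap infects both claims.

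Concretely, you must show $\gamma^2 f(\gamma q^*)+f(q^*)+q^* f'(q^*)>0$ at every critical point. Your plan is to substitute $q^*=\gamma S(\gamma q^*)/f(q^*)$ and use only the pointwise regularity inequality $2f^2(q^*)+S(q^*)f'(q^*)\ge 0$ plus $\gamma^2 f(\gamma q^*)\ge 0$. Carrying that out gives
\[
\gamma^2 f(\gamma q^*)+f(q^*)+q^* f'(q^*)\;\ge\;\gamma^2 f(\gamma q^*)+f(q^*)\Bigl(1-\tfrac{2\gamma S(\gamma q^*)}{S(q^*)}\Bigr),
\]
and the second term is negative whenever $\gamma S(\gamma q^*)/S(q^*)>1/2$, which certainly happens for $\gamma$ close to $1$ (at $\gamma=1$ the ratio equals $1$). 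Discarding the first term via mere non-negativity therefore cannot close the argument: you need a quantitative lower bound on $f(\gamma q^*)/f(q^*)$ in terms of $S(\gamma q^*)/S(q^*)$, i.e.\ information about the distribution on the whole interval $[q^*,\gamma q^*]$, not just the differential form of regularity at the single point $q^*$. This is precisely the content of the paper's \lemmaref{lemma:ratio-strong-regular}: monotonicity of $w$ gives $h(t)>1/(t-v+1/h(v))$ for $t\in(v,\gamma v]$, integrating yields $\gamma S(\gamma v)/S(v)<\gamma/(1+(\gamma-1)v h(v))$ and $\gamma h(\gamma v)/h(v)>\gamma/(1+(\gamma-1)v h(v))$, and an AM--GM step then shows $z_\gamma(v)=v-\gamma S(\gamma v)/f(v)$ is strictly increasing, which is equivalent to the second-order inequality you need (and, as a bonus, gives global strict decrease of $g'$, so uniqueness follows without the local-max sandwich argument). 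Without this interval-based estimate (or an equivalent), your uniqueness and differentiability steps do not go through; with it, your outline becomes essentially the paper's proof.
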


\subsection{Balancing consumer surplus and producer surplus} \sectionlabel{sect:balancing}
A stronger constraint can inevitably lead to the decrease in producer surplus. However, it remains a question on whether the constraints can lead to an increase in consumer surplus. To answer it, we first show that the lower price $p_l$ and upper price $p_u$ have close relationships with total surplus and consumer surplus.

\begin{proposition} \propositionlabel{prop:price-and-surplus}
    Total surplus is strictly decreasing \textit{w.r.t.} lower price $p_l$ when $p_l > c$. Consumer surplus is strictly decreasing \textit{w.r.t.} upper price $p_u$.
\end{proposition}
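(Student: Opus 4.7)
The plan is to prove both monotonicity claims directly by differentiating the closed-form expressions for $\TS(p_l,p_u)$ and $\CS(p_l,p_u)$ with respect to the relevant price using the Leibniz integral rule, and then verifying that the resulting derivatives are strictly negative under the stated hypotheses.

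First I would handle the total surplus claim. Starting from
\begin{equation*}
\TS(p_l, p_u) = \int_{p_l}^{U} (v-c)f(v)\mathrm{d}v,
\end{equation*}
I would apply Leibniz's rule to obtain $\partial \TS/\partial p_l = -(p_l - c)f(p_l)$. Since $F$ is a demand distribution supported on $[0,U]$ with twice-differentiable cdf (as guaranteed by strong regularity in \definitionref{defn:strongly-regular}), the density $f(p_l)$ is strictly positive at any interior point where pricing is meaningful, and the factor $(p_l - c)$ is strictly positive by assumption $p_l > c$. Hence the partial derivative is strictly negative, giving strict monotonicity.

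Next I would treat the consumer surplus claim. Starting from
\begin{equation*}
\CS(p_l, p_u) = \int_{p_u}^{U}(v - p_u)f(v)\mathrm{d}v,
\end{equation*}
I would differentiate with respect to $p_u$, carefully accounting for both the moving lower limit and the dependence of the integrand on $p_u$. The boundary term vanishes because the integrand $(v-p_u)f(v)$ equals zero at $v = p_u$, leaving
\begin{equation*}
\frac{\partial \CS}{\partial p_u} = \int_{p_u}^{U} (-1)f(v)\mathrm{d}v = -S(p_u).
\end{equation*}
For any $p_u < U$ we have $S(p_u) > 0$, so this derivative is strictly negative, yielding strict monotonicity of $\CS$ in $p_u$.

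There is no real obstacle here; the only subtle point is the boundary term when differentiating $\CS$, which vanishes because the integrand is continuous and equal to zero at the lower limit. I would just mention this to justify dropping the boundary term. The argument uses only the definitions of $\TS$ and $\CS$ and elementary calculus, and does not require any regularity properties of $F$ beyond the existence of $f$.
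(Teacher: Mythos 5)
Your proposal is correct and follows essentially the same route as the paper's own proof: differentiating $\TS$ with respect to $p_l$ to get $-(p_l-c)f(p_l)<0$, and differentiating $\CS$ with respect to $p_u$, where the boundary term $(p_u-p_u)f(p_u)$ vanishes, leaving $-S(p_u)<0$. The only caveat, shared tacitly by the paper, is that strictness requires $f(p_l)>0$ and $S(p_u)>0$ at the points considered, which you already note.
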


With this proposition, the impact of the constraints on consumer surplus can be measured by the monotonicity of the optimal upper price $p_u^*(\epsilon)$ and $q_u^*(\gamma)$.

\subsubsection{\texorpdfstring{$\epsilon$}{Epsilon}-difference fair}
\begin{theorem} \theoremlabel{thrm:gap-increasing-upper-price}
    When $c=0$, if F is a monotone hazard rate distribution, then $p_u^*(\epsilon)$ is strictly increasing and $\CSGap(\epsilon)$ is strictly decreasing \textit{w.r.t.} $\epsilon$.
\end{theorem}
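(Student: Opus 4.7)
The plan is to exploit the implicit function theorem applied to the first-order condition of the reduced one-variable problem, and then invoke the monotone hazard rate assumption to pin down the sign of $d p_u^*/d\epsilon$. First I would argue that the constraint $p_u - p_l \le \epsilon$ binds at any optimum: when $c=0$, one has $\partial \PS/\partial p_l = -p_l f(p_l) < 0$ for $p_l > 0$ and $\partial \PS/\partial p_u = S(p_u) > 0$ for $p_u < U$, which together push the unconstrained maximum to $(p_l, p_u) = (0, U)$, infeasible when $\epsilon < U$. Substituting $p_l = p_u - \epsilon$ into $\PS$ reduces the problem to a univariate maximization whose first-order condition reads
\[
S(p_u^*) \;=\; (p_u^* - \epsilon)\, f(p_u^* - \epsilon) \;=\; p_l^*\, f(p_l^*),
\]
which by \propositionref{prop:gap-existence} admits a unique, differentiable solution in $\epsilon$.

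Writing $\psi(p) := p f(p)$ and differentiating the FOC implicitly with respect to $\epsilon$ gives
\[
\frac{d p_u^*}{d\epsilon} \;=\; \frac{\psi'(p_l^*)}{f(p_u^*) + \psi'(p_l^*)}.
\]
Since $f(p_u^*) > 0$, the sign of $d p_u^*/d\epsilon$ is determined entirely by the sign of $\psi'(p_l^*) = f(p_l^*) + p_l^* f'(p_l^*)$. This is where MHR enters: differentiating $h = f/S$ and using $h' \ge 0$ yields $f'(v) \ge -f(v) h(v)$, hence
\[
\psi'(p_l^*) \;\ge\; f(p_l^*)\bigl(1 - p_l^*\, h(p_l^*)\bigr).
\]
Rewriting the FOC as $p_l^*\, h(p_l^*) = S(p_u^*)/S(p_l^*)$ and using $p_u^* > p_l^*$ for $\epsilon > 0$ implies $S(p_u^*) < S(p_l^*)$, so $p_l^*\, h(p_l^*) < 1$. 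Therefore $\psi'(p_l^*) > 0$ and $d p_u^*/d\epsilon > 0$ on $(0, U)$, which together with continuity at $\epsilon = 0$ (where $p_u^*(0)$ equals the uniform monopoly price) yields that $p_u^*$ is strictly increasing on $[0, U)$.

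Finally, the consumer surplus expression $\CS(p_l, p_u) = \int_{p_u}^U (v - p_u) f(v)\, \mathrm{d}v$ depends only on $p_u$ and is strictly decreasing in it by \propositionref{prop:price-and-surplus}. Composing with the strictly increasing $p_u^*(\epsilon)$ immediately gives the claimed strict monotonicity of $\CSGap(\epsilon)$. The main technical obstacle is establishing $\psi'(p_l^*) > 0$: this is the one step where MHR (rather than mere strong regularity) is genuinely used, and it requires combining the hazard-rate inequality $f' \ge -f h$ with the FOC to force $p_l^*\, h(p_l^*) < 1$ whenever $\epsilon > 0$.
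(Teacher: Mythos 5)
Your proposal is correct and follows essentially the same route as the paper: reduce to the univariate first-order condition $S(p_u^*) = (p_u^*-\epsilon)f(p_u^*-\epsilon)$, apply the implicit function theorem, use the MHR inequality $f' \ge -f^2/S$ together with the FOC to show $f(p_l^*) + p_l^* f'(p_l^*) \ge f(p_l^*)\bigl(1 - S(p_u^*)/S(p_l^*)\bigr) > 0$, and conclude via \propositionref{prop:price-and-surplus}. The only cosmetic differences are your explicit argument that the constraint binds and the $\psi$ notation; the substance matches the paper's proof.
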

\begin{remark}
    We should notice that, a stronger constraint represents a smaller $\epsilon$. As a result, according to the theorem, when $F$ is MHR, a stronger constraint will lead to a decrease in optimal upper price $p_u^*$, as well as an increase in consumer surplus.
\end{remark}

However, when $F$ is not a monotone hazard rate distribution, even if $F$ is a strongly regular distribution, we can not guarantee the monotonicity of $p_u^*(\epsilon)$. But we can show that $p_u^*(\epsilon)$ is monotone over a range of $\epsilon$.

\begin{theorem} \theoremlabel{thrm:gap-consumer-under-regular}
    When $c=0$, if $F$ is a $c$-strongly regular distribution, let $\epsilon_0$ be the solution to equation $\epsilon - 2p_l^*(\epsilon) = 0$. Then when $\epsilon > \epsilon_0$, $p_u^*(\epsilon)$ is strictly increasing and $\CSGap(\epsilon)$ is strictly decreasing \textit{w.r.t.} $\epsilon$.
\end{theorem}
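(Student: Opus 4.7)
The plan is to use implicit differentiation of the first-order condition (FOC) to reduce strict monotonicity of $p_u^*(\epsilon)$ to positivity of a single scalar $A$, and then exploit strong regularity together with the specific threshold $\epsilon_0$ to conclude $A>0$.

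Since $\partial \PS/\partial p_u = S(p_u) > 0$, the constraint binds at the optimum, so $p_u^* = p_l^* + \epsilon$; substituting and differentiating in $p_l$ yields the FOC $S(p_u^*) = p_l^* f(p_l^*)$, with second-order condition $A+B>0$ (available from the strict-maximum argument in \propositionref{prop:gap-existence}), where $A := f(p_l^*) + p_l^* f'(p_l^*)$ and $B := f(p_u^*)$. Implicit differentiation then gives $dp_l^*/d\epsilon = -B/(A+B) < 0$ and $dp_u^*/d\epsilon = A/(A+B)$, so $p_u^*$ is strictly increasing iff $A>0$. Writing strong regularity as $w'(v) = 2 + S(v) f'(v)/f(v)^2 > 0$ and substituting $f'(v) > -2 f(v)^2/S(v)$ into $A$, together with the FOC, gives the strict bound $A > f(p_l^*)(1 - 2r)$ with $r := p_l^* h(p_l^*) = S(p_u^*)/S(p_l^*) \in (0,1)$. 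Hence $r \le 1/2$ suffices.

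I would then show $r(\epsilon)$ is strictly decreasing in $\epsilon$: strong regularity is equivalent to $h'(v) + h(v)^2 > 0$, so $(ph)'(p) = h + p h' > h(1 - ph)$, which is positive at $p_l^*$ since $p_l^* h(p_l^*) = r < 1$; combined with $dp_l^*/d\epsilon < 0$, this yields $dr/d\epsilon < 0$. The crux is $r(\epsilon_0) < 1/2$: strong regularity rewritten as $(1/h)'(v) < 1$ gives $1/h(v) < 1/h(p_l^*) + (v - p_l^*)$ strictly on $(p_l^*, p_u^*]$, whence
\[
-\log r \;=\; \int_{p_l^*}^{p_u^*} h(v)\, dv \;>\; \int_{p_l^*}^{p_u^*} \frac{dv}{1/h(p_l^*) + v - p_l^*} \;=\; \log\!\bigl(1 + \epsilon_0\, h(p_l^*)\bigr) \;=\; \log(1 + 2r),
\]
where the last equality uses $\epsilon_0 = 2 p_l^*(\epsilon_0)$, so $\epsilon_0 h(p_l^*) = 2 p_l^* h(p_l^*) = 2r$. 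Rearranging gives $1/r > 1 + 2r$, \textit{i.e.}, $2r^2 + r - 1 < 0$, forcing $r < 1/2$ since $r > 0$.

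Assembling: for $\epsilon > \epsilon_0$, $r(\epsilon) < r(\epsilon_0) < 1/2$, so $A>0$ and $p_u^*$ is strictly increasing. Since $\CS(p_l, p_u) = \int_{p_u}^{U}(v-p_u) f(v)\, dv$ depends only on $p_u$ and is strictly decreasing in $p_u$ by \propositionref{prop:price-and-surplus}, $\CSGap(\epsilon)$ is strictly decreasing as well. I expect the main obstacle to be the crux: seeing that the definition $\epsilon_0 = 2 p_l^*(\epsilon_0)$ is precisely what turns the strong-regularity integral estimate into the clean quadratic $2r^2 + r - 1 < 0$, rather than a distribution-dependent bound.
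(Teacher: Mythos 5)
Your proposal is correct and follows essentially the same route as the paper's proof: reduce to the sign of the numerator $A=f(p_l^*)+p_l^*f'(p_l^*)$ via implicit differentiation of the binding FOC $S(p_u^*)=p_l^*f(p_l^*)$, bound $A\ge f(p_l^*)\bigl(1-2p_l^*h(p_l^*)\bigr)$ using strong regularity, and control $r=p_l^*h(p_l^*)$ through the hazard-integral inequality $\int_{p_l^*}^{p_u^*}h(v)\,\mathrm{d}v>\ln\bigl(1+\epsilon h(p_l^*)\bigr)$ together with the defining relation $\epsilon_0=2p_l^*(\epsilon_0)$. The only (harmless) differences are cosmetic: you establish $r<1/2$ for $\epsilon>\epsilon_0$ by proving $r(\epsilon)$ decreasing and evaluating the clean quadratic $2r^2+r-1<0$ at $\epsilon_0$, where the paper bounds $r$ pointwise via $t=p_l^*/\epsilon<1/2$, and your pointwise strict inequalities such as $w'(v)>0$ should be stated non-strictly ($w'\ge 0$ is what strict monotonicity gives), which does not affect any conclusion since the strictness you actually need comes from $r<1/2$ and from $w(v)>w(p_l^*)$ for $v>p_l^*$.
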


\begin{remark}
    We first notice that $p_l^*(\epsilon)$ is decreasing and show it in \theoremref{thrm:gap-decreasing-lower-price}. This implies $\epsilon - 2 p_l^*(\epsilon)$ is strictly increasing. In addition, $p_l^*(\epsilon) \le p_l^*(0)$. As a result, the solution to $\epsilon - 2p_l^*(\epsilon) = 0$ exists and $\epsilon_0 \le 2 p_l^*(0)$, which means $\epsilon_0$ is not greater than the double of uniform price. We empirically show that the range of $\epsilon > \epsilon_0$ is fairly large on a class of common strongly regular but not MHR demand distributions, \textit{i.e.}, power law distributions in \sectionref{sect:experiments}.
\end{remark}

\subsubsection{\texorpdfstring{$\gamma$}{Gamma}-ratio fair}
As shown above, $\epsilon$-difference fair can not guarantee the increase of consumer surplus when demand distribution is strongly regular. However, as for the $\gamma$-ratio constraint, the monotonicity of consumer surplus will always hold.

\begin{theorem} \theoremlabel{thrm:ratio-monotone-upper-price}
    When $c=0$, if F is a $c$-strongly regular distribution, $q_u^*(\gamma)$ is strictly increasing and $\CSRatio(\gamma)$ is strictly decreasing \textit{w.r.t.} $\gamma$.
\end{theorem}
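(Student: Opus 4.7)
The plan is to prove strict monotonicity of $q_u^*(\gamma)$ by implicit differentiation of the first-order condition, and then invoke Proposition~\propositionref{prop:price-and-surplus}, which states that consumer surplus depends only on the upper price and is strictly decreasing in it. Because $\partial\PS/\partial q_u = S(q_u) > 0$ and $\partial\PS/\partial q_l = -q_l f(q_l) < 0$ when $c = 0$, the ratio constraint is active at any optimum. Setting $t = q_l$ and $q_u = \gamma t$ reduces the problem to maximizing
\[
G(t) \;=\; \gamma t\, S(\gamma t) + \int_t^{\gamma t} v\, f(v)\,\mathrm{d}v,
\]
and Proposition~\propositionref{prop:ratio-existence} supplies a unique, differentiable interior maximizer $t^*(\gamma)$ with $q_u^*(\gamma) = \gamma t^*(\gamma)$.

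The first-order condition reads $\gamma S(\gamma t^*) = t^* f(t^*)$. Differentiating implicitly in $\gamma$ and folding this identity back in (the $\gamma^2 t^* f(\gamma t^*)$ cross terms collapse through $\gamma f(\gamma t^*)\, w(\gamma t^*) = \gamma^2 t^* f(\gamma t^*) - t^* f(t^*)$) yields
\[
(q_u^*)'(\gamma) \;=\; \frac{t^*\bigl[2 f(t^*) + t^* f'(t^*)\bigr]}{f(t^*) + t^* f'(t^*) + \gamma^2 f(\gamma t^*)}.
\]
The denominator equals $-G''(t^*)$ and is strictly positive, since $t^*$ is the unique interior maximizer of $G$.

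The main obstacle is signing the numerator: strong regularity does not immediately give $2 f(t^*) + t^* f'(t^*) > 0$, so I route through the auxiliary claim that $t^*(\gamma) \le v^*$, where $v^*$ is the unconstrained uniform monopoly price characterized by $w(v^*) = 0$. To prove the claim, I evaluate $G'(v^*) = \gamma S(\gamma v^*) - S(v^*)$ using $v^* f(v^*) = S(v^*)$, and observe that $\psi(\gamma) := \gamma S(\gamma v^*)$ satisfies $\psi'(\gamma) = -f(\gamma v^*)\, w(\gamma v^*) < 0$ for $\gamma > 1$, because strict monotonicity of $w$ forces $w(\gamma v^*) > w(v^*) = 0$. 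Hence $G'(v^*) < 0$ for $\gamma > 1$; combined with $G'(0) = \gamma > 0$ and the single-peaked structure underlying the uniqueness in Proposition~\propositionref{prop:ratio-existence}, this forces $t^*(\gamma) < v^*$.

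Finally, $t^* \le v^*$ is equivalent to $t^*\, h(t^*) \le 1$ where $h = f/S$ is the hazard rate, since $v h(v) - 1 = f(v)\, w(v)/S(v)$ has the same sign as $w(v)$. Strong regularity $w'(v) > 0$ rearranges to $2 f(v)^2 + S(v) f'(v) > 0$, i.e., $f'(v) > -2 f(v)\, h(v)$. Multiplying by $t^*$ and adding $2 f(t^*)$ produces
\[
2 f(t^*) + t^* f'(t^*) \;>\; 2 f(t^*)\bigl(1 - t^* h(t^*)\bigr) \;\ge\; 0,
\]
so the numerator is strictly positive, $(q_u^*)'(\gamma) > 0$, and Proposition~\propositionref{prop:price-and-surplus} then delivers strict monotonicity of $\CSRatio(\gamma)$.
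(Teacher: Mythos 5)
Your proof is correct in substance but follows a genuinely different route from the paper's. The paper proves \theoremref{thrm:ratio-monotone-upper-price} and \theoremref{thrm:ratio-monotone-lower-price} jointly: it first shows $q_u^*(\gamma)\ge q_u^*(1)$ by a contradiction argument (if $q_u^*$ ever dipped below $q_u^*(1)$, then on that range $q_l^*$ would be strictly increasing while $q_u^*$ is non-increasing, making $\gamma=q_u^*/q_l^*$ non-increasing, a contradiction), uses this to sign $\mathrm{d}q_l^*/\mathrm{d}\gamma\le 0$, feeds $q_l^*(\gamma)\le q_l^*(1)$ back into the same numerator $2q_l^*f(q_l^*)+(q_l^*)^2f'(q_l^*)$ that you analyze, and then runs a second contradiction argument to upgrade non-strict to strict monotonicity. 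You instead work directly with the one-dimensional reduction $G(t)=\PS(t,\gamma t)$, obtain the identical derivative formula (your implicit differentiation of the $q_l$-FOC, after substituting $\gamma S(\gamma t^*)=t^*f(t^*)$, reproduces exactly the paper's expression for $\mathrm{d}q_u^*/\mathrm{d}\gamma$), and sign the numerator by proving $t^*(\gamma)<v^*$ directly: you evaluate $G'(v^*)=\gamma S(\gamma v^*)-S(v^*)$ and show it is negative for $\gamma>1$ via $\psi(\gamma)=\gamma S(\gamma v^*)$ being decreasing, then invoke the single-crossing structure of $G'$ from \propositionref{prop:ratio-existence} (equivalently \lemmaref{lemma:ratio-strong-regular}). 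This is cleaner: it avoids both contradiction arguments, yields the strict inequality $t^*<v^*$ (hence strict positivity of the numerator) in one step, and proves only the half of the coupled statement actually needed, at the cost of not delivering the lower-price theorem along the way as the paper's intertwined argument does.

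Two small repairs. First, strict monotonicity of $w$ only gives $w'(v)\ge 0$, not $w'(v)>0$ (the paper itself writes $\ge$), so your final chain should place the strictness on the other factor: since $t^*<v^*$ strictly for $\gamma>1$, you have $1-t^*h(t^*)>0$, and then $2f(t^*)+t^*f'(t^*)\ge 2f(t^*)\bigl(1-t^*h(t^*)\bigr)>0$; positivity of $(q_u^*)'$ on $(1,\infty)$ plus continuity at $\gamma=1$ still gives strict increase on all of $[1,\infty)$. Second, "the denominator is positive because $t^*$ is the unique interior maximizer" is not by itself sufficient (a unique maximizer only forces $G''(t^*)\le 0$); the correct justification is the one already in the proof of \propositionref{prop:ratio-existence}, namely $-G''(t^*)=f(t^*)\,z_\gamma'(t^*)>0$ from \lemmaref{lemma:ratio-strong-regular}, which is also what licenses the differentiability of $t^*$ you use. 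With these adjustments the argument is complete, and the conclusion for $\CSRatio(\gamma)$ follows from \propositionref{prop:price-and-surplus} exactly as in the paper.
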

\begin{remark}
    Similar to $\epsilon$-difference fair, a stronger constraint corresponds to a smaller $\gamma$ here. As a result, a stronger constraint on $\gamma$-ratio fair will lead to the decrease in optimal upper price $q_u^*(\gamma)$ and increase in consumer surplus.
\end{remark}

\subsection{Drop on total surplus} \sectionlabel{sect:total-surplus}
In this subsection, we show that imposing either of the constraints will harm total surplus. This result is reasonable, given that the efficiency-equity trade-off is primarily recognized in practice \citep{mankiw2014principles}. In addition, the perfect price discrimination achieves the maximal market efficiency and any regulatory policy attempting to avoid perferct price discrimation will inevitably harm the total surplus. Similar to previous sections, stronger constraints represent smaller $\epsilon$ and $\gamma$ in the two constraints.

\subsubsection{\texorpdfstring{$\epsilon$}{Epsilon}-difference fair}
\begin{theorem} \theoremlabel{thrm:gap-decreasing-lower-price}
    When $c=0$, if F is a $c$-strongly regular distribution, then $p_l^*(\epsilon)$ is strictly decreasing and $\TSGap(\epsilon)$ is strictly increasing \textit{w.r.t.} $\epsilon$.
\end{theorem}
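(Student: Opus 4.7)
My plan is to reduce the claim about $\TSGap$ to the strict monotonicity of the optimal lower price $p_l^*(\epsilon)$, and then establish that monotonicity via a revealed-preference (monotone comparative statics) argument. Since $\TS(p_l,p_u)=\int_{p_l}^U v f(v)\,\mathrm{d}v$ depends only on $p_l$ and is strictly decreasing in it by \propositionref{prop:price-and-surplus}, once $p_l^*(\epsilon)$ is shown to be strictly decreasing the conclusion that $\TSGap(\epsilon)$ is strictly increasing follows at once.

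The first step will be to observe that the constraint $p_u-p_l\le\epsilon$ must bind at any optimum: $\partial_{p_u}\PS = S(p_u) > 0$ on the interior of the support while $\partial_{p_l}\PS = -p_l f(p_l) < 0$ for $p_l>0$, so any slack could be removed by raising $p_u$ or lowering $p_l$ and strictly improving $\PS$. Hence $p_u^*(\epsilon)=p_l^*(\epsilon)+\epsilon$ and the problem becomes one-dimensional. Next, take $\epsilon_1<\epsilon_2$ and let $p_l^{(i)}:=p_l^*(\epsilon_i)$. Since $(p_l^{(2)},p_l^{(2)}+\epsilon_1)$ is $\epsilon_1$-feasible and $(p_l^{(1)},\min(p_l^{(1)}+\epsilon_2,U))$ is $\epsilon_2$-feasible, optimality yields the pair of inequalities
\begin{equation*}
\PS(p_l^{(1)},p_l^{(1)}+\epsilon_1) \ge \PS(p_l^{(2)},p_l^{(2)}+\epsilon_1), \qquad \PS(p_l^{(2)},p_l^{(2)}+\epsilon_2) \ge \PS(p_l^{(1)},p_l^{(1)}+\epsilon_2),
\end{equation*}
where I adopt the natural convention $\PS(p_l,p_u)=\PS(p_l,U)$ for $p_u\ge U$. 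Summing them gives $\Delta(p_l^{(1)}) \ge \Delta(p_l^{(2)})$, where $\Delta(p_l) := \PS(p_l,p_l+\epsilon_1) - \PS(p_l,p_l+\epsilon_2)$.

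The decisive step will be to show that $\Delta$ is strictly increasing. Using the explicit formula for $\PS$, direct differentiation produces the clean cancellation
\begin{equation*}
\Delta'(p_l) = S(p_l+\epsilon_1) - S(p_l+\epsilon_2) > 0,
\end{equation*}
where strictness comes from strict monotonicity of $S$ on the support. Hence $\Delta(p_l^{(1)})\ge\Delta(p_l^{(2)})$ forces $p_l^{(1)} \ge p_l^{(2)}$. To upgrade to a strict inequality I argue by contradiction: if $p_l^{(1)}=p_l^{(2)}=p_l$, then both optima satisfy the first-order condition $S(p_l+\epsilon_i)=p_l f(p_l)$, forcing $S(p_l+\epsilon_1)=S(p_l+\epsilon_2)$ and contradicting strict monotonicity of $S$.

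The main subtlety I anticipate is the boundary case $p_l^{(1)}+\epsilon_2>U$ (possible only when $U<\infty$); there the feasibility step uses the clipped upper price $U$ rather than $p_l^{(1)}+\epsilon_2$, but the natural extension of $\PS$ keeps the chain of inequalities intact and $\Delta'>0$ still holds wherever $p_l+\epsilon_1<U$. I prefer this revealed-preference route over differentiating the first-order condition, because it sidesteps having to verify a strict second-order condition under the merely strongly regular assumption.
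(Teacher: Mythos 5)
Your proposal is correct, but it takes a genuinely different route from the paper. The paper's proof is a one-line application of the implicit function theorem: it differentiates the first-order condition $G_l(p_l^*,\epsilon)=S(p_l^*+\epsilon)-p_l^*f(p_l^*)=0$ and gets $\mathrm{d}p_l^*/\mathrm{d}\epsilon=-f(p_l^*+\epsilon)\big/\left(f(p_l^*+\epsilon)+f(p_l^*)+p_l^*f'(p_l^*)\right)<0$, where the positivity of the denominator is exactly the inequality $\partial G_l/\partial p_l<0$ established in the proof of \propositionref{prop:gap-existence} via \lemmaref{lemma:gap-strong-regular}; the statement about $\TSGap$ then follows from \propositionref{prop:price-and-surplus}, just as in your last step. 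Your revealed-preference argument is instead a Topkis-style monotone comparative statics proof: the objective $\PS(p_l,p_l+\epsilon)$ has decreasing differences in $(p_l,\epsilon)$ (your $\Delta'(p_l)=S(p_l+\epsilon_1)-S(p_l+\epsilon_2)>0$ is precisely this), which yields weak monotonicity of $p_l^*$ without differentiating the optimizer, and you upgrade to strictness via the first-order condition $S(p_l+\epsilon_i)=p_lf(p_l)$ and strict monotonicity of $S$. What your route buys is robustness: it never needs differentiability of $p_l^*(\epsilon)$ nor the curvature-type inequality from \lemmaref{lemma:gap-strong-regular}, only the existence, uniqueness, and interiority of the optimum (which you do still need from \propositionref{prop:gap-existence}, both to make $p_l^*$ well defined and to justify the binding constraint and the first-order condition in your strictness step, together with $f>0$ on the interior of the support, which the paper assumes implicitly). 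What the paper's route buys is an explicit derivative formula that is reused elsewhere (e.g., the proof of \theoremref{thrm:gap-consumer-under-regular} manipulates the companion expression for $\mathrm{d}p_u^*/\mathrm{d}\epsilon$), so the two proofs are complementary rather than interchangeable within the paper's overall structure. Your handling of the boundary case $p_l^{(1)}+\epsilon_2>U$ via the clipped price and the convention $\PS(p_l,p_u)=\PS(p_l,U)$ for $p_u\ge U$ is sound, since the explicit formula for $\PS$ already has this behavior.
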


\subsubsection{\texorpdfstring{$\gamma$}{Gamma}-ratio fair}
\begin{theorem} \theoremlabel{thrm:ratio-monotone-lower-price}
    When $c=0$, if F is a $c$-strongly regular distribution, then $q_l^*(\gamma)$ is strictly decreasing and $\TSRatio(\gamma)$ is strictly increasing \textit{w.r.t.} $\gamma$.
\end{theorem}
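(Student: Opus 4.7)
The plan is to reduce the two-dimensional optimization to one dimension via active-constraint analysis, then apply the implicit function theorem to the reduced first-order condition, and use \theoremref{thrm:ratio-monotone-upper-price} together with strong regularity to pin down the sign of the resulting derivative. Since $c = 0$, differentiating gives $\partial \PS(q_l, q_u)/\partial q_l = -q_l f(q_l) < 0$ for any $q_l$ in the interior of the support, so the monopolist strictly prefers the smallest admissible $q_l$ and the ratio constraint is tight at optimum, $q_u^*(\gamma) = \gamma\, q_l^*(\gamma)$. Substituting the tight constraint reduces revenue to the one-variable function $g(q_l; \gamma) = \gamma q_l S(\gamma q_l) + \int_{q_l}^{\gamma q_l} v f(v)\, \mathrm{d}v$, whose derivative simplifies (after the two $\gamma^2 q_l f(\gamma q_l)$ terms cancel) to
\[
H(q_l, \gamma) := \gamma S(\gamma q_l) - q_l f(q_l),
\]
and $q_l^*(\gamma)$ is the unique root of $H(\cdot, \gamma) = 0$.

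Next I would differentiate implicitly. By \propositionref{prop:ratio-existence} the optimizer is unique and differentiable, so the implicit function theorem forces $H_{q_l}(q_l^*(\gamma), \gamma) \neq 0$; combined with the standard second-order condition at a unique interior maximum this yields $H_{q_l} < 0$ at the optimum. A direct calculation gives
\[
H_\gamma = S(\gamma q_l) - \gamma q_l f(\gamma q_l) = -f(q_u^*)\, w(q_u^*),
\]
where $w(v) = v - S(v)/f(v)$ is the virtual-valuation function from \definitionref{defn:strongly-regular}. Hence $\mathrm{d}q_l^*/\mathrm{d}\gamma = -H_\gamma/H_{q_l}$ carries the opposite sign of $w(q_u^*(\gamma))$.

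The crux is thus to establish $w(q_u^*(\gamma)) > 0$ for $\gamma > 1$. At $\gamma = 1$ the ratio constraint collapses to uniform pricing, whose first-order condition $S(p) - p f(p) = 0$ is precisely $w(p) = 0$, so $w(q_u^*(1)) = 0$. By \theoremref{thrm:ratio-monotone-upper-price}, $q_u^*(\gamma) > q_u^*(1)$ for all $\gamma > 1$, and strong regularity ensures $w$ is strictly increasing; therefore $w(q_u^*(\gamma)) > 0$, giving $\mathrm{d}q_l^*/\mathrm{d}\gamma < 0$ on $(1, \infty)$. A short continuity argument—for $\gamma > 1$, pick any $\gamma_0 \in (1, \gamma)$ where the derivative is strictly negative to obtain $q_l^*(\gamma_0) > q_l^*(\gamma)$, then let $\gamma_0 \downarrow 1$ to get $q_l^*(1) \ge q_l^*(\gamma_0) > q_l^*(\gamma)$—extends strict monotonicity to the closed interval $[1, \infty)$. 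The total-surplus claim then follows immediately from \propositionref{prop:price-and-surplus}: $\TS(p_l, p_u)$ is strictly decreasing in $p_l$ whenever $p_l > 0$, and composing with the strictly decreasing $\gamma \mapsto q_l^*(\gamma)$ yields $\TSRatio(\gamma)$ strictly increasing.

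The main obstacle is recognizing that the sign of $H_\gamma$ is controlled by $w$ evaluated at the \emph{upper} price $q_u^*$, not at $q_l^*$; this is exactly what makes \theoremref{thrm:ratio-monotone-upper-price} and the strict monotonicity of $w$ from strong regularity available as the two key inputs. A secondary subtlety is the degenerate boundary $\gamma = 1$, where both $w(q_u^*(1)) = 0$ and $\mathrm{d}q_l^*/\mathrm{d}\gamma$ vanish, so the pointwise derivative sign alone does not give strict decrease at the endpoint; the minor continuity step above is needed to preserve strict monotonicity across the full admissible range.
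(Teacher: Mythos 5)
Your derivation is correct, and its computational core coincides with the paper's: you reduce to the tight constraint $q_u=\gamma q_l$, implicitly differentiate the first-order condition $H(q_l,\gamma)=\gamma S(\gamma q_l)-q_l f(q_l)=0$, note that $\partial H/\partial q_l<0$ at the optimum, identify the numerator as $S(q_u^*)-q_u^*f(q_u^*)=-f(q_u^*)\,w(q_u^*)$, and sign it using $w(q_u^*(1))=0$ (the uniform-pricing first-order condition), $q_u^*(\gamma)>q_u^*(1)$, and the strict increase of $w$; your continuity step at $\gamma=1$ and the appeal to \propositionref{prop:price-and-surplus} for $\TSRatio$ are also fine. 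The genuine difference is structural: you import $q_u^*(\gamma)>q_u^*(1)$ from \theoremref{thrm:ratio-monotone-upper-price}, whereas the paper cannot, because it proves \theoremref{thrm:ratio-monotone-upper-price} and \theoremref{thrm:ratio-monotone-lower-price} in one interleaved argument --- it first establishes $q_u^*(\gamma)\ge q_u^*(1)$ (hence $q_l^*$ non-increasing) by a bootstrap contradiction using only differentiability and the identity $\gamma=q_u^*/q_l^*$, then gets $q_u^*$ non-decreasing from $q_l^*(\gamma)\le q_l^*(1)$, and finally upgrades both to strict monotonicity. Since the upper-price theorem is stated earlier and the paper's joint proof establishes it without invoking your argument, your citation introduces no circularity into the document; what it buys you is a much shorter proof, but it is not a self-contained alternative, because the hard step (comparing $q_u^*(\gamma)$ with the uniform price before anything is known about how $q_l^*$ moves) is precisely what you delegate to the cited theorem. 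One minor quibble: the implicit function theorem does not ``force'' $\partial H/\partial q_l\neq 0$ from the differentiability asserted in \propositionref{prop:ratio-existence}; rather, $\partial H/\partial q_l<0$ at the optimum is computed explicitly in that proposition's proof via \lemmaref{lemma:ratio-strong-regular}, and that computation is what you should cite (your fallback second-order-condition argument also works, since $H(\cdot,\gamma)$ crosses its unique zero from positive to negative).
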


\subsection{\texorpdfstring{$\epsilon$-difference fair vs $\gamma$-ratio fair}{Difference fair vs ratio fair}}\sectionlabel{sect:gap-vs-ratio}
We compare the performance on the trade-off of between consumer surplus and producer surplus in this subsection under either of the constraints. The result is given by the following theorem.

\begin{theorem} \theoremlabel{thrm:gap-vs-ratio}
    Suppose $F$ is strongly regular. Suppose $0 \le \epsilon < U$, $\gamma \ge 1$, and $\CSGap(\epsilon) = \CSRatio(\gamma)$. Then $\TSGap(\epsilon) \ge \TSRatio(\gamma)$ and $\PSGap(\epsilon) \ge \PSRatio(\gamma)$.
\end{theorem}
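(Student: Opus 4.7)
The plan is to reduce both inequalities to the single statement $p_l^*(\epsilon) \le q_l^*(\gamma)$ and then establish this by combining the two first-order conditions with a monotonicity property of $H(v) \triangleq v^2 f(v)$ on $[0, v^*]$, where $v^*$ is the unconstrained uniform monopoly price (the unique root of the virtual-value function $w$).

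First, since $\CS(p_l, p_u) = \int_{p_u}^{U}(v - p_u) f(v)\,\mathrm{d}v$ depends only on $p_u$ and is strictly decreasing in it (Proposition \ref{prop:price-and-surplus}), the hypothesis $\CSGap(\epsilon) = \CSRatio(\gamma)$ forces $p_u^*(\epsilon) = q_u^*(\gamma)$; denote this common value by $p_u$. Because $\TS(p_l, p_u) = \int_{p_l}^{U} v f(v)\,\mathrm{d}v$ depends only on $p_l$ (as $c = 0$), and $\TS = \CS + \PS$, I get $\TSGap(\epsilon) - \TSRatio(\gamma) = \PSGap(\epsilon) - \PSRatio(\gamma) = \int_{p_l^*(\epsilon)}^{q_l^*(\gamma)} v f(v)\,\mathrm{d}v$, so both conclusions reduce to showing $p_l^*(\epsilon) \le q_l^*(\gamma)$.

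Next I would extract the FOCs at each optimum. Because $\PS$ is strictly decreasing in $p_l$, a short KKT argument (using $\epsilon < U$ to rule out $p_l = 0$) shows that both constraints bind, and the FOCs read
\begin{equation*}
p_l^*(\epsilon)\, f(p_l^*(\epsilon)) = S(p_u), \qquad q_l^*(\gamma)\, f(q_l^*(\gamma)) = \gamma\, S(p_u), \qquad q_l^*(\gamma) = p_u/\gamma.
\end{equation*}
A direct substitution gives $H(q_l^*(\gamma)) = p_u\, S(p_u) = p_u\, p_l^*(\epsilon)\, f(p_l^*(\epsilon))$ and $H(p_l^*(\epsilon)) = p_l^*(\epsilon)^2 f(p_l^*(\epsilon))$, hence
\begin{equation*}
H(q_l^*(\gamma)) - H(p_l^*(\epsilon)) = \epsilon\, p_l^*(\epsilon)\, f(p_l^*(\epsilon)) \ge 0,
\end{equation*}
with strict inequality when $\epsilon > 0$.

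The crux is then the monotonicity claim that $H$ is non-decreasing on $[0, v^*]$. Writing $R(v) = v S(v)$, a short calculation gives $H'(v) = -v R''(v)$, so it suffices to prove $R$ is concave on $(0, v^*]$. Strong regularity rearranges to $f'(v) \ge -2 f(v) h(v)$; combined with the bound $v h(v) \le 1$ on $[0, v^*]$ (itself obtained from $R'(v) = S(v)(1 - v h(v)) \ge 0$), this yields $2 f(v) + v f'(v) \ge 0$, i.e.\ $R''(v) \le 0$. Finally, Theorems \ref{thrm:gap-decreasing-lower-price} and \ref{thrm:ratio-monotone-lower-price} give $p_l^*(\epsilon) \le p_l^*(0) = v^*$ and $q_l^*(\gamma) \le q_l^*(1) = v^*$, so both arguments of $H$ lie in $[0, v^*]$; the strict inequality $H(q_l^*(\gamma)) > H(p_l^*(\epsilon))$ when $\epsilon > 0$ then precludes $q_l^*(\gamma) < p_l^*(\epsilon)$ (weak monotonicity of $H$ alone forbids this), while the degenerate case $\epsilon = 0$ forces $\gamma = 1$ by strict monotonicity of $q_u^*$ and both lower prices equal $v^*$. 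The main obstacle I expect is extracting the concavity of $R$ on $(0, v^*]$ from strong regularity via the chain $f' \ge -2 f h$ and $v h \le 1$; the remainder is essentially bookkeeping with the two first-order conditions.
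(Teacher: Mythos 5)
Your proof is correct, but it follows a genuinely different route from the paper. The paper argues by contradiction with a feasibility (revealed-preference) comparison: assuming $\TSGap(\epsilon)<\TSRatio(\gamma)$, it deduces $q_u^*(\gamma)=p_u^*(\epsilon)$ and $q_l^*(\gamma)<p_l^*(\epsilon)$ via \propositionref{prop:price-and-surplus}, sets $\gamma'=p_u^*(\epsilon)/p_l^*(\epsilon)<\gamma$, uses uniqueness (\propositionref{prop:ratio-existence}) to get $\PS(q_l^*(\gamma'),q_u^*(\gamma'))>\PS(p_l^*(\epsilon),p_u^*(\epsilon))$, and then uses the strict monotonicity of $q_l^*,q_u^*$ (\theoremref{thrm:ratio-monotone-upper-price}, \theoremref{thrm:ratio-monotone-lower-price}) to show $q_u^*(\gamma')-q_l^*(\gamma')<\epsilon$, so the $\gamma'$-ratio optimum is feasible for the $\epsilon$-difference problem, contradicting optimality of $(p_l^*(\epsilon),p_u^*(\epsilon))$. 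You instead reduce both inequalities (correctly, via \propositionref{prop:price-and-surplus} and $\TS=\CS+\PS$) to the single comparison $p_l^*(\epsilon)\le q_l^*(\gamma)$ and settle it by combining the two first-order conditions through $H(v)=v^2f(v)$: the identity $H(q_l^*(\gamma))-H(p_l^*(\epsilon))=\epsilon\, S(p_u)\ge 0$ together with $H'(v)=2vf(v)+v^2f'(v)\ge 2vf(v)\left(1-vh(v)\right)\ge 0$ on $[0,v^*]$, where the bound $f'\ge -2fh$ comes from $w'\ge 0$ and $vh(v)\le 1$ comes from $w\le 0$ below the uniform price $v^*$ --- this is essentially the same estimate the paper uses inside the proof of \theoremref{thrm:ratio-monotone-upper-price}, repurposed for the comparison. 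Each route buys something: yours is quantitative (it pins down the FOC gap $\epsilon S(p_u)$, and since $S(p_u)>0$ when $\epsilon<U$ it in fact yields strict inequalities $\TSGap(\epsilon)>\TSRatio(\gamma)$ and $\PSGap(\epsilon)>\PSRatio(\gamma)$ whenever $\epsilon>0$), and it uses \theoremref{thrm:gap-decreasing-lower-price} and \theoremref{thrm:ratio-monotone-lower-price} only to place both lower prices in $[0,v^*]$ and to dispatch the degenerate case $\epsilon=0$; the paper's route needs no new analytic computation at all, trading it for uniqueness plus feasibility, which makes it shorter and insensitive to the exact form of the first-order conditions. In a final write-up you should simply cite \propositionref{prop:gap-existence} and \propositionref{prop:ratio-existence} for the binding constraints and interior first-order conditions (no separate KKT discussion is needed), and state explicitly that $S(p_u)>0$ so the inequality $H(q_l^*(\gamma))>H(p_l^*(\epsilon))$ used to exclude $q_l^*(\gamma)<p_l^*(\epsilon)$ is strict when $\epsilon>0$.
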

\begin{remark}
    This theorem shows that the $\epsilon$-difference constraint outperforms the $\gamma$-ratio constraint on the trade-off between consumer surplus and producer surplus, resulting in higher efficiency while fairness is guaranteed.
\end{remark}

\subsection{General cases when marginal cost is positive} \sectionlabel{sect:marginal-cost-greater}
To simplify proofs, we suppose the marginal cost $c$ is zero in previous subsections. In this subsection, we show that the conclusions also hold for general settings when $c$ is greater than zero.

The general idea is that we can transfer the raw distribution $F$ to a new distribution $\tilde{F}$ supported on $[0, U - c]$ ($U - c = \infty$ if $U = \infty$) and the social welfare for positive marginal cost on distribution $F$ are equal to that for zero marginal cost on the new distribution $\tilde{F}$. The corresponding functions can be given as
\begin{equation}
    \tilde{f}(v) = \frac{f(v + c)}{S(c)}, \quad \tilde{S}(v) = \frac{S(v + c)}{S(c)}, \quad \tilde{h}(v) = h(v + c).
\end{equation}
Then the revenue, consumer surplus, and total surplus with lower price $p_l$ and upper price $p_u$ can be written as
\begin{equation}
    \left\{
    \begin{aligned}
        \PS(p_l, p_u) & = (p_u - c)S(p_u) + \int_{p_l}^{p_u} (v - c)f(v)\mathrm{d}v = S(c)\left((p_u - c)\tilde{S}(p_u - c) + \int_{p_l - c}^{p_u - c} v \tilde{f}(v)\mathrm{d}v\right), \\
        \CS(p_l, p_u) & = \int_{p_u}^{u} (v-p_u)f(v)\mathrm{d}v = S(c)\int_{p_u - c}^{u - c} (v-(p_u - c))\tilde{f}(v)\mathrm{d}v, \\
        \TS(p_l, p_u) & = \int_{p_l}^{u} (v - c)f(v) \mathrm{d}v = S(c)\int_{p_l - c}^{u} v\tilde{f}(v) \mathrm{d}v.
    \end{aligned}
    \right.
\end{equation}
They are proportional to producer surplus, consumer surplus and total surplus with lower price $(p_l - c)$ and upper price $(p_u - c)$ when the demand distribution is $\tilde{F}$ and marginal cost is zero. In addition, the properties of $\tilde{F}$ are given by the following proposition.

\begin{proposition} \propositionlabel{prop:hold-valuation-assumption}
    If $F$ is a monotone hazard rate distribution, so does $\tilde{F}$. If $F$ is a $c$-strongly regular distribution, $\tilde{F}$ will be a $0$-strongly regular distribution.
\end{proposition}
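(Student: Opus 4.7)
The proposition splits into two assertions, and both follow from the observation that the transformation $v \mapsto v+c$ acts as a shift on the relevant functionals, so monotonicity and differentiability are preserved; the only nontrivial bookkeeping is how the limit condition in the definition of $k$-strongly regular distributions transforms.

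The plan is first to dispose of the MHR part, then to handle the strongly regular part by computing $\tilde{w}(v)$ explicitly. For the MHR claim, I would note that $\tilde{f}(v)=f(v+c)/S(c)$ is twice differentiable whenever $f$ is (on the shifted support $[0,U-c]$), and directly from the formulas in the excerpt $\tilde{h}(v)=h(v+c)$. Since $h$ is non-decreasing on $[0,U]$ by hypothesis, $\tilde{h}$ is non-decreasing on $[0,U-c]$. This establishes the first assertion.

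For the strongly regular part, the key step is the identity
\begin{equation}
\tilde{w}(v)\;=\;v-\frac{\tilde{S}(v)}{\tilde{f}(v)}\;=\;v-\frac{S(v+c)/S(c)}{f(v+c)/S(c)}\;=\;v-\frac{S(v+c)}{f(v+c)}\;=\;w(v+c)-c .
\end{equation}
Twice differentiability of $\tilde{F}$ follows from that of $F$, and strict monotonicity of $\tilde{w}$ is inherited from strict monotonicity of $w$ via this shift-and-translate identity. For the third clause of Definition~\ref{defn:strongly-regular} with $k=0$, I would compute
\begin{equation}
\lim_{v\to U-c}\tilde{w}(v)\;=\;\lim_{u\to U}w(u)-c\;>\;c-c\;=\;0,
\end{equation}
using that $F$ is $c$-strongly regular, i.e.\ $\lim_{v\to U}w(v)>c$. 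Hence $\tilde{F}$ is $0$-strongly regular on $[0,U-c]$.

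The only subtlety I anticipate is being careful about the support: when $U=\infty$ we have $U-c=\infty$ and the limit is taken along $v\to\infty$; when $U<\infty$ the shifted support is $[0,U-c]$ and the limit is one-sided. Neither case causes a problem because the identity $\tilde{w}(v)=w(v+c)-c$ holds pointwise on the respective shifted support, and the monotonicity and limit statements transfer verbatim. No extra machinery is needed beyond these direct substitutions.
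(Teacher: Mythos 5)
Your proposal is correct and follows essentially the same route as the paper's proof: the MHR claim via $\tilde{h}(v)=h(v+c)$, and the strong-regularity claim via the shift identity $\tilde{w}(v)=w(v+c)-c$ together with the limit bound $\lim w > c$ giving $\lim \tilde{w} > 0$. Your version is if anything slightly more careful than the paper's (which states the limit as $v\to U$ rather than $v\to U-c$ on the shifted support), but the argument is the same.
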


As a result, our major conclusions in previous subsections hold for more general settings when marginal cost is positive.
\section{Experiments} \sectionlabel{sect:experiments}
We run experiments on both simulation and real-world datasets to prove the correctness of our theoretical results.
\subsection{Simulation}
\subsubsection{Data} 
We simulate common demand distributions including MHR distributions, namely uniform and exponential distributions \citep{besbes2009dynamic,besbes2015surprising,cohen2021price}, and a strongly regular distribution, namely power law distribution \citep{zang2018learning}.
The detailed properties of these distributions are shown in \tableref{tab:common-distributions}. We choose $a=1$ for uniform distribution, $\lambda=1$ for exponential distribution, and $\Delta=1$, $\alpha=2$ for power law distribution. The marginal cost is set to $0$. Note that the choice of parameters does not bring a difference as long as the distribution is MHR or strongly regular.

\begin{figure}[t]
    \centering
    \includegraphics[width=\linewidth]{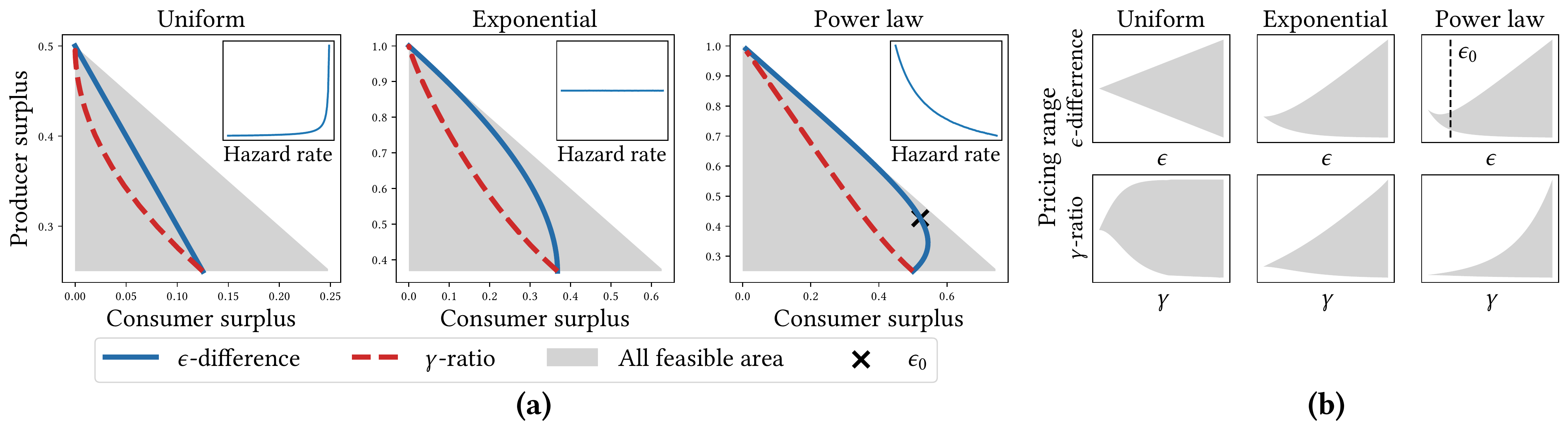}
    \caption{Experiments on simulations. Subfigure (a) shows trade-off curves between consumer surplus and producer surplus under the $\epsilon$-difference and $\gamma$-ratio constraints, as well as the hazard rate functions of uniform, exponential, and power law distributions. Subfigure (b) shows the pricing range under the constraints.}
    \figurelabel{fig:cs-and-revenue-simulation}
\end{figure}

\subsubsection{Results and analysis}
With closed-form probability density functions of these distributions, we can optimize \equationref{eq:gap-constraint} and \equationref{eq:ratio-constraint} directly. We choose different $\epsilon$ and $\gamma$ and calculate the corresponding optimal lower / upper price, consumer surplus, producer surplus and total surplus accordingly.

The results on trade-off curves between consumer surplus and producer surplus are shown in \figureref{fig:cs-and-revenue-simulation}(a). The grey area in \figureref{fig:cs-and-revenue-simulation}(a) represents all possible pairs of consumer surplus and producer surplus that: (1) consumer surplus is nonnegative, (2) producer surplus is not less than revenue under uniform pricing, and (3) total surplus does not exceed the surplus generated by efficient trade \citep{bergemann2015limits}, \textit{i.e.}, $\mathbb{E}[\mathbb{I}[V \ge c](V-c)]$. The endpoints of the trade-off curve for both constraints fall on the upper left corner and the bottom line of the grey area, which denote the perfect price discrimination and the uniform pricing, respectively. We also show the pricing range under both constraints in \figureref{fig:cs-and-revenue-simulation}(b)
. From the figure, we have the following observations and they all coincide with the theoretical results listed in \sectionref{sect:impacts}.

\begin{itm}
    \item \textit{The trade-off between consumer surplus and producer surplus.} As shown in \figureref{fig:cs-and-revenue-simulation}(a), for MHR distributions (uniform and exponential), both constraints can help balance the consumer surplus and producer surplus for the whole range of regulatory intensity. However, when the distribution is strongly regular but not MHR, the monotonicity property holds for the $\gamma$-ratio constraint but fails for the $\epsilon$-difference constraint. Under this circumstance, the $\epsilon$-difference fair constraint can guarantee the property when the trade-off curve is above the X point shown in \figureref{fig:cs-and-revenue-simulation}(a), which corresponds to the proposed $\epsilon_0$ in \theoremref{thrm:gap-consumer-under-regular}. In addition, the monotonicity property of the optimal upper price can be validated in \figureref{fig:cs-and-revenue-simulation}(b) and further proves the trade-off between consumer surplus and producer surplus, according to \propositionref{prop:price-and-surplus}.
    \item \textit{Drop on total surplus.} As shown in \figureref{fig:cs-and-revenue-simulation}(a), as an anchor point moves along the curves from the upper left corner to the bottom line, the distance between the anchor point to the upper right borderline of the grey triangle becomes larger. This implies both constraints can lead to a decrease in total surplus given that the distance can be viewed as the drop of total surplus 
    from the efficient trade. Furthermore, a stricter constraint results in a larger loss on total surplus. In addition, the monotonicity property of the optimal lower price is presented in \figureref{fig:cs-and-revenue-simulation}(b) and the property can further prove the loss on total surplus, according to \propositionref{prop:price-and-surplus}. The drop on total surplus is reasonable because the perfect price discrimination achieves the maximal market efficiency and any regulatory policies attempting to avoid perfect price discrimination will inevitably harm total surplus.
    \item \textit{$\epsilon$-difference fair vs $\gamma$-ratio fair.} As shown in \figureref{fig:cs-and-revenue-simulation}, the trade-off curve of $\epsilon$-difference fair is on top of that achieved by $\gamma$-ratio fair as long as the demand distribution is strongly regular, which validates our theoretical results in \theoremref{thrm:gap-vs-ratio}.
\end{itm}

\subsection{Real-world datasets}
\subsubsection{Datasets and preprocessing}
% We choose the following real-world datasets.
\begin{enum}
    \item \textit{Coke and cake}. \citet{wertenbroch2002measuring} adopted Becker, DeGroot, and Marschak’s method \citep{becker1964measuring} to estimate willingness-to-pay for a can of Coca-Cola on a public beach and a piece of pound cake on a commuter ferry in Kiel, Germany. The quantity demanded are then regressed with a logistic model $S(p) \propto 1/\left(1 + \mathrm{e}^{-(a+bp)}\right)$. The fitted parameters are $a=3.94$, $b=-3.44$ for the demand of Coke and $a=4.58$, $b=-3.72$ for the demand of cake. These parameters are good estimations of the raw willingness to pay according to \citep{wertenbroch2002measuring}. We use the fitted logistic model as the demand function. As a result, the demand distributions are logistic distributions and satisfy the assumption of MHR.
    \item \textit{Elective vaccine}. \citet{slunge2015willingness} studied willingness to pay for vaccination against tick-borne encephalitis in Sweden. They asked individuals with covariate $x$ about take-up at a random price of 100, 250, 500, 750, or 1000 SEK. We follow \citep{slunge2015willingness,kallus2021fairness} and learn a logistic regression model of binary demand by appending the price variable with the other covariates, \textit{i.e.}, $D(x, p) = \sigma(\gamma^Tx+\beta p)$ where $\sigma(\cdot)$ is the logistic function. The overall demand function can be given as $S(p) = \mathbb{E}_x[D(x, p)]$.
    \item \textit{Auto loan}. The dataset records 208,085 auto loan applications received by a major online lender in the United States with loan-specific features. Following \citep{phillips2015effectiveness,ban2021personalized,luo2021distribution}, we adopt the feature selection results and consider only four features: the loan amount approved, FICO score, prime rate, and the competitor’s rate. The price $p$ of a loan is computed as the net present value of future payment minus the loan amount, \textit{i.e.}, $p=\text{Monthly Payment} \times \sum_{\tau=1}^\text{Term}(1+\text{Rate})^{-\tau}-\text{Loan Amount}$. Following \citep{luo2021distribution}, we set the rate as $0.12\%$ to estimate the monthly London interbank offered rate for the studied time period. We further fit the demand function with a logistic regression model.
\end{enum}

\begin{figure}[t]
    \centering
    \includegraphics[width=\linewidth]{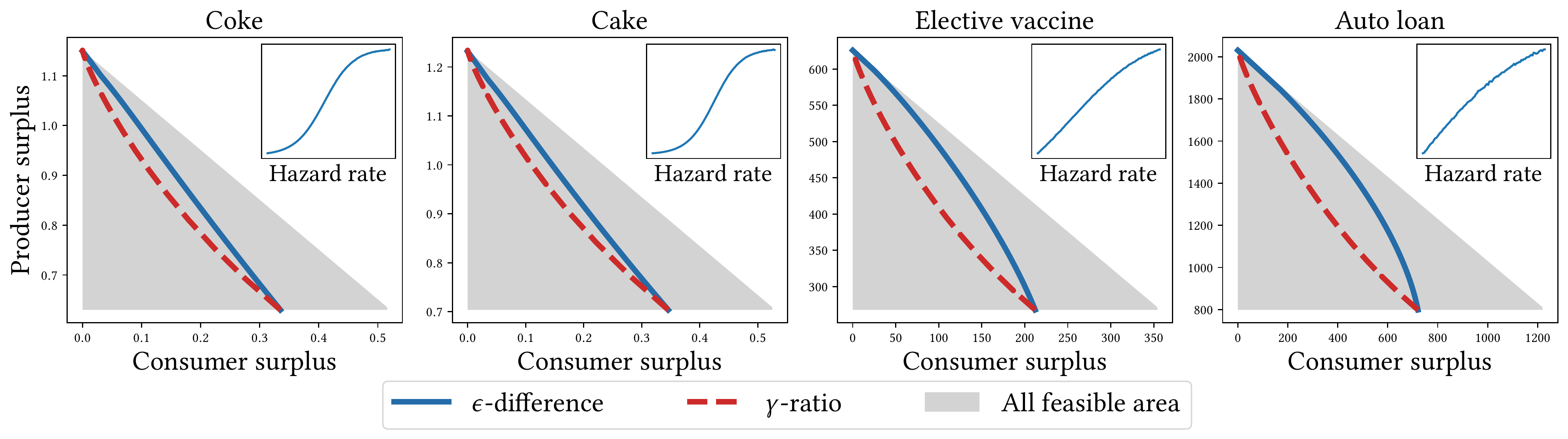}
    \caption{Experiments on real-worlds datasets. It shows trade-off curves between consumer surplus and producer surplus under the $\epsilon$-difference and $\gamma$-ratio constraints, as well as the hazard rate functions of these datasets.}
    \figurelabel{fig:cs-and-revenue-real}
\end{figure}

\subsubsection{Results and analysis} With closed-form demand functions of these datasets, we can optimize \equationref{eq:gap-constraint} and \equationref{eq:ratio-constraint} directly. We choose different $\epsilon$ and $\gamma$ and calculate the corresponding optimal lower / upper price, consumer surplus, producer surplus and total surplus accordingly.

We first analyze the hazard rate function for the demand distributions on these real-world datasets. As shown in the upper right subfigures of \figureref{fig:cs-and-revenue-real}, all of the datasets approximately satisfy the MHR condition. For coke and cake, the demand distributions are logistic and they satisfy the MHR condition by nature. For elective vaccine and auto loan, the hazard rate functions are not strictly increasing and we can find fluctuations in the figures. However, they can be considered increasing from the long-run trend.

The trade-off curve between consumer surplus and producer surplus can be found in \figureref{fig:cs-and-revenue-real}. As shown in the figure, the fluctuations on hazard rate functions do not affect the major results. Similar to simulation experiments, both constraints can help balance consumer surplus and producer surplus at the expense of total surplus. In addition, the curve of the $\gamma$-ratio constraint is on top of the $\epsilon$-difference constraint. These results match the theories we propose in \sectionref{sect:impacts}.

\section{Conclusions} \sectionlabel{sect:discussions}
To conclude, in this paper, we propose two sound and effective policy instruments on the range of personalized prices and study their impact on consumer surplus, producer surplus, and social welfare. For common demand distributions, both constraints can help balance consumer surplus and producer surplus at the expense of total surplus. In addition, the $\epsilon$-difference constraint has better performance on the trade-off between consumer surplus and producer surplus  while the $\gamma$-ratio constraint is more suitable for designing policies. As a result, the two constraints could be adopted in different applications in practice.
\section{Acknowledgments}
This work was supported in part by National Natural Science Foundation of China (No. 62141607, U1936219, 72171131), National Key R\&D Program of China (No. 2020AAA0106300), Beijing Academy of Artificial Intelligence (BAAI), the Tsinghua University Initiative Scientific Research Grant (No. 2019THZWJC11), and Technology and Innovation Major Project of the Ministry of Science and Technology of China under Grant 2020AAA0108400 and 2020AAA01084020108403.

\bibliographystyle{plainnat}
\bibliography{references}

\clearpage
\appendix
\section{Omitted proofs}
\subsection{Proof of \texorpdfstring{\propositionref{prop:mhr-and-regular}}{Proposition}}
\begin{proof}
    Suppose $F$ is a monotone hazard rate distribution, which means $h(v)$ is non-decreasing. As a result, $w(v) = v - \frac{1}{h(v)}$ is strictly increasing. In addition,
    $$
    \lim_{v \rightarrow U} w(v) \begin{cases}
        \ge \lim_{v \rightarrow \infty} v - \frac{1}{h(0)} = \infty, & \text{if } U = \infty, \\
        = U - \frac{S(U)}{f(U)} = U > k & \text{otherwise}.
    \end{cases}
    $$
    As a result, $F$ is $k$-strongly regular.
\end{proof}

\subsection{Proof of \texorpdfstring{\propositionref{prop:gap-existence}}{Proposition}}
To prove the proposition, we need the following lemma.
\begin{lemma} \lemmalabel{lemma:gap-strong-regular}
    If $F$ is a $c$-strongly regular distribution, then $\forall 0 \le \epsilon < U$, $r_{\epsilon}(v) \triangleq v - \epsilon - S(v)/f(v - \epsilon)$ is strictly increasing. In addition, $\lim_{v \rightarrow U}r_{\epsilon}(v) > 0$.
\end{lemma}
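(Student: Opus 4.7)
My plan is to establish strict monotonicity by showing $r_\epsilon'(v)>0$ pointwise, and then read off the limit statement from a convex-combination representation of $r_\epsilon$. Direct differentiation yields
\[
r_\epsilon'(v) \;=\; 1 + \frac{f(v)}{f(v-\epsilon)} + \frac{S(v)\,f'(v-\epsilon)}{f(v-\epsilon)^2},
\]
and strong regularity will enter in two complementary ways---once pointwise at $v-\epsilon$, and once as an integrated condition over $[v-\epsilon,v]$.

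For the pointwise use, strong regularity gives $w'(v-\epsilon) = 2 + S(v-\epsilon)f'(v-\epsilon)/f(v-\epsilon)^2 > 0$, hence $f'(v-\epsilon)/f(v-\epsilon)^2 > -2/S(v-\epsilon)$. Multiplying by $S(v)>0$, the last term of $r_\epsilon'(v)$ is bounded below by $-2R$, where $R \triangleq S(v)/S(v-\epsilon) \in (0,1]$. This reduces the goal to verifying $1 + f(v)/f(v-\epsilon) \ge 2R$, which, upon rewriting $f(v)/f(v-\epsilon) = (h(v)/h(v-\epsilon))R$ via $f=hS$, is equivalent to $h(v)/h(v-\epsilon) + 1/R \ge 2$.

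The key observation is that both summands admit exponential representations,
\[
\frac{h(v)}{h(v-\epsilon)} \;=\; \exp\!\left(\int_{v-\epsilon}^{v}\frac{h'(t)}{h(t)}\,dt\right), \qquad \frac{1}{R} \;=\; \exp\!\left(\int_{v-\epsilon}^{v} h(t)\,dt\right),
\]
whose combined exponent equals $\int_{v-\epsilon}^{v}(h'(t)+h(t)^2)/h(t)\,dt$; strong regularity is precisely $h'(t)+h(t)^2>0$ pointwise, so this exponent is strictly positive for $\epsilon>0$. AM-GM then gives $e^A+e^B \ge 2e^{(A+B)/2} > 2$, so $1+f(v)/f(v-\epsilon) > 2R$ strictly. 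Chained with the pointwise bound above, this yields $r_\epsilon'(v)>0$ for $\epsilon>0$; the boundary case $\epsilon=0$ reduces to $r_0 = w$, whose derivative is positive directly by strong regularity.

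For the limit, if $U<\infty$ then $r_\epsilon(U) = U-\epsilon - S(U)/f(U-\epsilon) = U-\epsilon > 0$. If $U=\infty$, the identity $S(v)/f(v-\epsilon) = R(v)(v-\epsilon-w(v-\epsilon))$ gives $r_\epsilon(v) = R(v)w(v-\epsilon) + (1-R(v))(v-\epsilon)$, a convex combination of $w(v-\epsilon)$ and $v-\epsilon$; since $w(v-\epsilon)\le v-\epsilon$, we obtain $r_\epsilon(v)\ge w(v-\epsilon)\to \lim_{u\to U}w(u) > 0$ by the third clause of $c$-strong regularity with $c=0$. The main obstacle is the middle step: a naive pointwise application of strong regularity only yields a negative lower bound on $r_\epsilon'$, and one has to recognize that the pointwise quantity $h'(t)+h(t)^2$ controlled by strong regularity is exactly the integrand that appears when one combines $h(v)/h(v-\epsilon)$ and $1/R$ as exponentials and applies AM-GM.
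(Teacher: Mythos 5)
Your argument is correct and follows the paper's skeleton quite closely: you differentiate $r_\epsilon$, use the regularity condition $w'\ge 0$ only at the point $v-\epsilon$ to control the $f'(v-\epsilon)$ term, and thereby reduce the claim to exactly the inequality the paper isolates, $\exp\left(\int_{v-\epsilon}^{v}h(t)\,\mathrm{d}t\right) + h(v)/h(v-\epsilon) \ge 2$; your limit argument (finite $U$ trivially, and for $U=\infty$ the bound $r_\epsilon(v)\ge w(v-\epsilon)$, which is the paper's use of $S(v)<S(v-\epsilon)$ in convex-combination form) is also the same. The genuine difference is how the key two-term inequality is certified: the paper integrates the pointwise bound $h(t) > 1/\left(t-(v-\epsilon)+1/h(v-\epsilon)\right)$, obtained from strict monotonicity of $w$, to get the explicit bounds $\exp\left(\int h\right) > x$ and $h(v)/h(v-\epsilon) > 1/x$ with $x = 1+h(v-\epsilon)\epsilon$, then applies $x+1/x\ge 2$; you instead write both terms as exponentials $e^{A}$, $e^{B}$ whose combined exponent is $A+B=\int_{v-\epsilon}^{v} h(t)\,w'(t)\,\mathrm{d}t \ge 0$ and invoke AM--GM, which avoids the explicit integration and is arguably cleaner. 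One imprecision to fix: strict monotonicity of a differentiable $w$ gives $h'(t)+h(t)^2\ge 0$ pointwise, not $>0$ (and likewise $w'(v-\epsilon)\ge 0$, which is all you actually use there), so you cannot assert that the combined exponent is strictly positive as stated. The strictness you need is free, however: for $\epsilon>0$ one has $B=\int_{v-\epsilon}^{v}h(t)\,\mathrm{d}t>0$, and under $A+B\ge 0$ equality $e^{A}+e^{B}=2$ would force $A=B=0$, so $e^{A}+e^{B}>2$ and hence $r_\epsilon'(v)>0$. With that one-line patch your proof is complete and matches the lemma's conclusion.
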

\begin{proof}
    The results for $r_0(v)$ hold by definition of strong regularity. Now consider $\epsilon > 0$. We express $r_{\epsilon}$ with hazard functions.
    $$
    r_{\epsilon}(v) = v - \epsilon - \frac{S(v)}{S(v - \epsilon)}\cdot \frac{S(v - \epsilon)}{f(v - \epsilon)} =  v - \epsilon - \frac{\exp\left(-\int_{v-\epsilon}^vh(t)\mathrm{d}t\right)}{h(v - \epsilon)}.
    $$
    Because $F$ is strongly regular, $w(v) = v - 1 / h(v)$ is strictly increasing, which implies $\frac{\mathrm{d} w}{\mathrm{d} v} = 1 + \frac{h'(v)}{h^2(v)} \ge 0$. Hence, $h'(v) \ge -h^2(v)$. Then we calculate the derivative of $r_{\epsilon}$,
    $$
    \begin{aligned}
        & \frac{\mathrm{d} r_{\epsilon}}{\mathrm{d} v} \\
        = & 1 - \frac{\exp\left(-\int_{v-\epsilon}^v h(t)\mathrm{d}t\right)}{h^2(v-\epsilon)}\left(h^2(v-\epsilon) - h(v)h(v-\epsilon) - h'(v-\epsilon)\right) \\
        \ge & 1 + \exp\left(-\int_{v-\epsilon}^v h(t)\mathrm{d}t\right)\left(\frac{h(v)}{h(v-\epsilon)} - 2\right).
    \end{aligned}
    $$
    Because $w(v)$ is strictly increasing, then $\forall t \in (v - \epsilon, v]$, $w(t) > w(v - \epsilon)$, which implies that $
    t - \frac{1}{h(t)} > v - \epsilon - \frac{1}{h(v - \epsilon)}$. Hence, $h(t) > \frac{1}{t - v + \epsilon + 1/h(v-\epsilon)}$.
    As a result, $\frac{h(v)}{h(v - \epsilon)} > \frac{1}{1 + h(v-\epsilon)\epsilon}$ and
    $$
    \begin{aligned}
        & \int_{v-\epsilon}^v h(t)\mathrm{d} t > \int_{v-\epsilon}^v \frac{1}{t - v + \epsilon + \frac{1}{h(v-\epsilon)}} \mathrm{d} t \\
        = & \left.\ln\left(t - v + \epsilon + \frac{1}{h(v-\epsilon)}\right)\right|_{t=v - \epsilon}^v = \ln \left(1 + h(v-\epsilon)\epsilon\right).
    \end{aligned}
    $$
    As a result,
    $$
    \begin{aligned}
        & \exp\left(\int_{v-\epsilon}^v h(t)\mathrm{d}t\right) \cdot \frac{\mathrm{d} r_{\epsilon}}{\mathrm{d} v} \ge \exp\left(\int_{v-\epsilon}^v h(t)\mathrm{d}t\right) + \frac{h(v)}{h(v-\epsilon)} - 2 \\
        > & 1 + h(v-\epsilon)\epsilon + \frac{1}{1 + h(v-\epsilon)\epsilon} - 2 \ge 0,
    \end{aligned}
    $$
    which implies $r_{\epsilon}(v)$ is strictly increasing. Now consider the limits of $r_{\epsilon}(v)$ when $v \rightarrow U$.
    $$
    \begin{aligned}
        & \quad \, \lim_{v \rightarrow U}r_{\epsilon}(v) = \lim_{v \rightarrow U} v - \epsilon - \frac{S(v)}{f(v - \epsilon)} \\
        & \begin{cases}
            > \lim_{v \rightarrow \infty} v - \epsilon - \frac{S(v - \epsilon)}{f(v - \epsilon)} = \lim_{v \rightarrow \infty} w(v) > 0, & \text{if } U =\infty, \\
            = U - \epsilon > 0, & \text{otherwise.}
        \end{cases}
    \end{aligned}
    $$
\end{proof}
\begin{proof}[Proof of \propositionref{prop:gap-existence}]
    It is obvious that $p^*_u(\epsilon) - p^*_l(\epsilon) = \epsilon$. Hence, $p_l^*(\epsilon) = \arg \max_{p_l} \PS(p_l, p_l + \epsilon)$. The derivative of $\PS(p_l, p_l+\epsilon)$ is
    $$
        G_l(p_l, \epsilon) \triangleq \frac{\mathrm{d} \PS}{\mathrm{d} p_l} = S(p_l + \epsilon) - p_lf(p_l) = f(p_l)\left(\frac{S(p_l + \epsilon)}{f(p_l)} - p_l\right).
    $$
    According to \lemmaref{lemma:gap-strong-regular}, $S(p_l + \epsilon)/f(p_l) - p_l = -r_{\epsilon}(p_l + \epsilon)$ is strictly decreasing \textit{w.r.t.} $p_l$. In addition,
    $$
    \begin{aligned}
        \lim_{p_l \rightarrow 0} \left(\frac{S(p_l + \epsilon)}{f(p_l)} - p_l\right) & = \frac{S(\epsilon)}{f(0)} > 0, \\
        \lim_{p_l \rightarrow (U - \epsilon)} \left(\frac{S(p_l + \epsilon)}{f(p_l)} - p_l\right) & = - \lim_{v \rightarrow U} r_{\epsilon}(v) < 0 \\
    \end{aligned}
    $$
    which implies the solution to $\mathrm{d} \PS / \mathrm{d} p_l = 0$ exists and is unique. As a result, for all $0 \le \epsilon < U$, the solution $\left(p_l^*(\epsilon), p_u^*(\epsilon)\right)$ exists and is unique.

    In addition, $G_l(p_l^*, \epsilon) = 0$ is the implicit function that determines the relationship between $p_l^*$ and $\epsilon$ and $G_l(p_l^*, \epsilon)$ is differentiable. Notice that according to \lemmaref{lemma:gap-strong-regular}, we have
    $$
    \frac{\mathrm{d} r_{\epsilon}}{\mathrm{d} v} = \frac{f^2(v-\epsilon) + f(v)f(v - \epsilon) + S(v)f'(v-\epsilon)}{f^2(v-\epsilon)} > 0,
    $$
    Hence,
    $$
    \begin{aligned}
        & \frac{\partial G_l}{\partial p_l}\left(p_l^*, \epsilon\right) = -f(p_l^*+\epsilon) - f(p_l^*) - p_l^*f'(p_l^*) \\
        = & -\left(f(p_l^* + \epsilon) + f(p_l^*) + \frac{S(p_l^* + \epsilon)}{f(p_l^*)}f'(p_l^*)\right) \\
        = & -f(p_l^*) \cdot \left.\frac{\mathrm{d} r_{\epsilon}}{\mathrm{d}v}\right|_{v = p_l^*+\epsilon} < 0.
    \end{aligned}
    $$
    According to the implicit function theorem \citep[Theorem 1.3.1]{krantz2012implicit}, $p_l^*(\epsilon)$ is differentiable, which implies that $p_u^*(\epsilon)$ is also differentiable.
\end{proof}

\subsection{Proof of \texorpdfstring{\propositionref{prop:ratio-existence}}{Proposition}}
To prove the proposition, we need the following lemma.
\begin{lemma} \lemmalabel{lemma:ratio-strong-regular}
    If $F$ is a $c$-strongly regular distribution, then $\forall \gamma \ge 1$, $z_{\gamma}(v) \triangleq v  - \frac{\gamma S(\gamma v)}{f(v)}$ is strictly increasing. In addition, $\lim_{\gamma v \rightarrow U}z_{\gamma}(v) > 0$.
\end{lemma}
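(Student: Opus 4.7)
The statement is the multiplicative counterpart of Lemma \lemmaref{lemma:gap-strong-regular}, so my plan is to mirror that proof while adapting the algebra to the ratio structure. From $F$ being $c$-strongly regular I would extract two workhorses: $h'(v)/h^2(v) \ge -1$ (equivalent to $w'(v) \ge 0$) and the pointwise bound $h(t) > 1/(t - v + 1/h(v))$ for $t > v$ coming from strict monotonicity of $w$. Both already underpin the proof of Lemma \lemmaref{lemma:gap-strong-regular} and transfer here verbatim.

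For strict monotonicity, the case $\gamma = 1$ reduces to $z_1(v) = w(v)$, so I would treat $\gamma > 1$ separately. I rewrite $z_\gamma(v) = v - (\gamma/h(v)) \exp(-A(v))$ with $A(v) := \int_v^{\gamma v} h(t)\,\mathrm{d}t$, differentiate, and multiply through by $e^{A(v)}$ to reach
\begin{equation*}
e^{A(v)} z_\gamma'(v) = e^{A(v)} + \gamma \cdot h'(v)/h^2(v) + \gamma^2 h(\gamma v)/h(v) - \gamma.
\end{equation*}
The first workhorse kills the middle term (replacing it with $-\gamma$). The second workhorse, integrated from $v$ to $\gamma v$, gives $e^{A(v)} > x$ and also $\gamma h(\gamma v)/h(v) > \gamma/x$ for $x := 1 + (\gamma - 1) v h(v)$. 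Substituting, the right-hand side becomes strictly larger than $x + \gamma^2/x - 2\gamma = (x - \gamma)^2/x \ge 0$, forcing $z_\gamma'(v) > 0$.

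For the limit I would split on whether $U$ is finite. If $U < \infty$, then $\gamma v \to U$ means $v \to U/\gamma$ and $S(\gamma v) \to 0$, so $z_\gamma(v) \to U/\gamma > 0$. If $U = \infty$, I reuse $S(\gamma v)/S(v) < 1/x$ to get $\gamma S(\gamma v)/f(v) < \gamma/(h(v) x)$, and then, writing $v = w(v) + 1/h(v)$, this rearranges algebraically to
\begin{equation*}
z_\gamma(v) > w(v) \cdot \frac{(\gamma - 1) v h(v) + \gamma}{1 + (\gamma - 1) v h(v)},
\end{equation*}
whose trailing factor is $\ge 1$ for $\gamma \ge 1$. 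Since $w(v) \to \lim_{v \to U} w(v) > 0$ by strong regularity, eventually $w(v) > 0$, and hence $\liminf_{\gamma v \to U} z_\gamma(v) \ge \lim_{v \to U} w(v) > 0$.

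The main obstacle I anticipate is the algebraic gluing in the monotonicity step: one must choose the substitution $x = 1 + (\gamma - 1) v h(v)$ so that the same quantity appears both as the lower bound for $e^{A(v)}$ and as the denominator in the lower bound for $\gamma h(\gamma v)/h(v)$. Only with that matched choice does the raw derivative estimate collapse into the AM-GM-friendly form $(x - \gamma)^2/x$; everything else is routine differentiation and strong-regularity bookkeeping borrowed from Lemma \lemmaref{lemma:gap-strong-regular}.
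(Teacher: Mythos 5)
Your proof is correct and takes essentially the same route as the paper's: rewrite $z_\gamma$ through the hazard rate, combine $h'(v)\ge -h^2(v)$ with the pointwise bound $h(t) > 1/(t-v+1/h(v))$ to obtain $\exp\left(\int_v^{\gamma v}h(t)\,\mathrm{d}t\right) > 1+(\gamma-1)vh(v)$ and $\gamma h(\gamma v)/h(v) > \gamma/\bigl(1+(\gamma-1)vh(v)\bigr)$, and close the monotonicity step with the same AM--GM inequality (yours normalized by $e^{A(v)}$, the paper's by $\tfrac{1}{\gamma}e^{A(v)}$). The only deviation is cosmetic: for $U=\infty$ you rearrange exactly to $z_\gamma(v) > w(v)\cdot\frac{(\gamma-1)vh(v)+\gamma}{1+(\gamma-1)vh(v)}$ and use that the factor is at least $1$ once $w(v)>0$, whereas the paper reaches the same comparison $z_\gamma(v) > w(v)$ via $\lim_{v\to\infty}vh(v)>1$; the two are equivalent.
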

\begin{proof}
    The results for $z_1(v)$ hold by definition, now consider $\gamma > 1$. We express $z_{\gamma}$ with hazard functions.
    $$
    z_{\gamma}(v)  = v - \gamma \cdot \frac{ S(\gamma v)}{S(v)} \cdot \frac{S(v)}{f(v)} = v - \gamma \frac{\exp\left(-\int_{v}^{\gamma v}h(t)\mathrm{d}t\right)}{h(v)}.
    $$
    The derivative of $z_{\gamma}$ is given by
    $$
    \begin{aligned}
        \frac{\mathrm{d} z_{\gamma}}{\mathrm{d} v} & = 1 - \gamma \cdot \frac{\exp\left(-\int_{v}^{\gamma v}h(t)\mathrm{d}t\right)}{h^2(v)}\left(h^2(v)-\gamma h(v)h(\gamma v) - h'(v)\right) \\
        & \ge 1 + \gamma \cdot \exp\left(-\int_{v}^{\gamma v}h(t)\mathrm{d}t\right)\left(\frac{\gamma h(\gamma v)}{h(v)} - 2\right).
    \end{aligned}
    $$
    Similar with the proof of \lemmaref{lemma:gap-strong-regular}, the last inequality is due to the monotonicity of $w(v)$. We also have $\forall t \in (v, \gamma v]$, $h(t) > \frac{1}{t - v + 1 / h(v)}$. As a result, $\frac{h(\gamma v)}{h(v)} > \frac{1}{1 + (\gamma - 1)vh(v)}$ and
    $$
    \begin{aligned}
        & \int_v^{\gamma v}h(t)\mathrm{d}t > \int_v^{\gamma v}\frac{1}{t-v + \frac{1}{h(v)}}\mathrm{d} t \\
        = & \left.\ln \left(t - v + \frac{1}{h(v)}\right)\right|_{t=v}^{\gamma v} = \ln\left(1+(\gamma-1)vh(v)\right).
    \end{aligned}
    $$
    As a result,
    $$
    \begin{aligned}
        & \frac{1}{\gamma} \exp\left(\int_{v}^{\gamma v}h(t)\mathrm{d}t\right) \cdot \frac{\mathrm{d} z_{\gamma}}{\mathrm{d} v} \ge \frac{1}{\gamma}\exp\left(\int_{v}^{\gamma v}h(t)\mathrm{d}t\right) + \frac{\gamma h(\gamma v)}{h(v)} - 2 \\
        > & \frac{1+(\gamma-1)vh(v)}{\gamma} + \frac{\gamma}{1+(\gamma-1)vh(v)} - 2 \ge 0,
    \end{aligned}
    $$
    which implies $z_{\gamma}(v)$ is strictly increasing. Now consider the limits of $z_{\gamma}(v)$ when $\gamma \rightarrow U$. If $U$ is finite, then $\lim_{\gamma v \rightarrow U} z_{\gamma}(v) = U/\gamma > 0$. When $U = \infty$, because $\lim_{v\rightarrow \infty}w(v) > 0$, we have $\lim_{v\rightarrow \infty}vh(v) > 1$. Then when $v \rightarrow \infty$,
    $$
    \begin{aligned}
        \frac{\gamma S(\gamma v)}{S(v)} & = \gamma \exp\left(-\int_{v}^{\gamma v}h(t)\mathrm{d}t\right) < \frac{\gamma}{1 + (\gamma - 1)vh(v)} \\
        & < \frac{\gamma}{1 + (\gamma - 1) \cdot 1} = 1.
    \end{aligned}
    $$
    Hence, when $v \rightarrow \infty$,
    $$
    z_{\gamma}(v) = v - \frac{\gamma S(\gamma v)}{S(v)} \cdot \frac{1}{h(v)} > v - \frac{1}{h(v)} = w(v) > 0.
    $$
\end{proof}

\begin{proof}[Proof of \propositionref{prop:ratio-existence}]
    It is obvious that $q_u^*(\gamma) = \gamma q_l^*(\gamma)$. Hence,
    $$
    q_l^*(\gamma) = \arg \max_{q_l} \PS(q_l, \gamma q_l).
    $$
    The derivative of $\PS(q_l, \gamma q_l)$ is
    $$
    H_l(q_l, \gamma) \triangleq \frac{\mathrm{d} \PS}{\mathrm{d}q_l} = \gamma S(\gamma q_l) - q_lf(q_l) = f(q_l)\left(\frac{\gamma S(\gamma q_l)}{f(q_l)}-q_l\right).
    $$
    According to \lemmaref{lemma:ratio-strong-regular}, $\left(\frac{\gamma S(\gamma q_l)}{f(q_l)}-q_l\right)$ is strictly decreasing. In addition,
    $$
    \begin{aligned}
        \lim_{q_l \rightarrow 0}\left(\frac{\gamma S(\gamma q_l)}{f(q_l)}-q_l\right) & = \frac{\gamma S(0)}{f(0)} > 0, \\
        \lim_{\gamma q_l \rightarrow U}\left(\frac{\gamma S(\gamma q_l)}{f(q_l)}-q_l\right) & = -\lim_{\gamma q_l \rightarrow U} z_{\gamma}(q_l) < 0,
    \end{aligned}
    $$
    which implies the solution to $\frac{\mathrm{d} \PS}{\mathrm{d} q_l} = 0$ exists and is unique. As a result, for all $\gamma \ge 1$, the solution $\left(q_l^*(\gamma), q_u^*(\gamma)\right)$ exists and unique.

    In addition, $H_l(q_l^*, \gamma) = 0$ is the implicit function that determines the relationship between $q_l^*$ and $\gamma$ and $H_l(q_l^*, \gamma)$ is continuously differentiable. Notice that according to \lemmaref{lemma:ratio-strong-regular},
    $$
    \frac{\mathrm{d} z_{\gamma}}{\mathrm{d} v} = \frac{f^2(v) + \gamma^2 f(v)f(\gamma v) + \gamma S(\gamma v)f'(v)}{f^2(v)} > 0,
    $$
    Hence,
    $$
    \begin{aligned}
        \frac{\partial H_l}{\partial q_l}\left(q_l^*, \gamma\right) = & -\gamma^2f(\gamma q_l^*) - f(q_l^*) - q_l^* f'(q_l^*) \\
        = & -\left(\gamma^2 f(\gamma q_l^*) + f(q_l^*) + \frac{\gamma S(\gamma q_l^*)}{f(q_l^*)}f'(q_l^*)\right) \\
        = & -f(q_l^*) \cdot \left.\frac{\mathrm{d} z_{\gamma}}{\mathrm{d}v}\right|_{v = q_l^*} < 0.
    \end{aligned}
    $$
    According to the implicit function theorem \citep[Theorem 1.3.1]{krantz2012implicit}, $q_l^*(\epsilon)$ is differentiable, which implies that $q_u^*(\epsilon)$ is also differentiable.
\end{proof}

\subsection{Proof of \texorpdfstring{\propositionref{prop:price-and-surplus}}{Proposition}}
\begin{proof}
    On the one hand, when $c < p_l < U$,
    $$
    \frac{\partial \TS (p_l, p_u)}{\partial p_l} = -(p_l-c)f(p_l) < 0.
    $$
    Hence total surplus is strictly decreasing \textit{w.r.t.} $p_l$. On the other hand, when $0 < p_u < U$,
    $$
    \frac{\partial \CS (p_l, p_u)}{\partial p_u} = - (p_u - p_u)f(p_u) + \int_{p_u}^U -f(v) \mathrm{d} v = -S(p_u) < 0,
    $$
    which implies consumer surplus is strictly decreasing \textit{w.r.t.} $p_u$.
\end{proof}

\subsection{Proof of \texorpdfstring{\theoremref{thrm:gap-increasing-upper-price}}{Theorem}}
\begin{proof}
    $\forall 0 < \epsilon < U$, it is obvious that $p^*_u(\epsilon) - p^*_l(\epsilon) = \epsilon$. The optimal $p_u^*$ is given by $p_u^*(\epsilon) = \arg \max_{p_u} \PS(p_u - \epsilon, p_u)$. The derivative of $\PS(p_u - \epsilon, p_u)$ is
    $$
    G_u(p_u, \epsilon) \triangleq \frac{\mathrm{d} \PS}{\mathrm{d} p_u}  = S(p_u) - (p_u-\epsilon)f(p_u-\epsilon).
    $$
    According to \propositionref{prop:gap-existence}, $G_u(p_u^*, \epsilon) = 0$ is the implicit function that determines the relationship between $p_u^*$ and $\epsilon$. According to \citep[Theorem 1.3.1]{krantz2012implicit}, the derivative of $p_u^*(\epsilon)$ is given by
    $$
        \frac{\mathrm{d}p_u^*}{\mathrm{d} \epsilon} = -\frac{\frac{\partial G_u}{\partial \epsilon}\left(p_u^*, \epsilon\right)}{\frac{\partial G_u}{\partial p_u}\left(p_u^*, \epsilon\right)} = \frac{f(p_u^*-\epsilon)+(p_u^*-\epsilon)f'(p_u^*-\epsilon)}{f(p_u^*) + f(p_u^* - \epsilon) + (p_u^*-\epsilon)f'(p_u^*-\epsilon)}.
    $$
    Because $F$ is a monotone hazard rate distribution, $h(v)$ is non-decreasing, which implies, $\frac{\mathrm{d} h}{\mathrm{d} v} = \frac{f'(v)S(v) + f^2(v)}{S^2(v)} \ge 0$. As a result, $f'(p_u^*-\epsilon) \ge -f^2(p_u^*-\epsilon)/S(p_u^*-\epsilon)$. Hence,
    $$
    \begin{aligned}
        & f(p_u^*-\epsilon)+(p_u^*-\epsilon)f'(p_u^*-\epsilon) \\
        \ge & f(p_u^*-\epsilon) - (p_u^*-\epsilon)\cdot\frac{f(p_u^*-\epsilon)f(p_u^*-\epsilon)}{S(p_u^*-\epsilon)} \\
        = & f(p_u^*-\epsilon)\left(1-\frac{S(p_u^*)}{S(p_u^* - \epsilon)}\right) > 0.
    \end{aligned}
    $$
    As a result, $\mathrm{d}p_u^*/\mathrm{d} \epsilon > 0$ and $p_u^*$ is strictly increasing \textit{w.r.t.} $\epsilon$. In addition, according to \propositionref{prop:price-and-surplus}, $\CSGap(\epsilon)$ is strictly decreasing.
\end{proof}

\subsection{Proof of \texorpdfstring{\theoremref{thrm:gap-consumer-under-regular}}{Theorem}}
\begin{proof}
    $\forall 0 < \epsilon < U$, consider the numerator of $\mathrm{d} p_u^*(\epsilon) / \mathrm{d} \epsilon$,
    $$
    \begin{aligned}
        & f(p_u^*-\epsilon)+(p_u^*-\epsilon)f'(p_u^*-\epsilon) = f(p_l^*) + p_l^*f'(p_l^*) \\
        = & h(p_l^*)\exp\left(-\int_0^vh(t)\mathrm{d}t\right) + p_l^*\left(h(p_l^*)\exp\left(-\int_0^vh(t)\mathrm{d}t\right)\right)'\\
        = & \exp \left(-\int_0^{p_l^*} h(t)\mathrm{d}t\right)\left(h(p_l^*) + p_l^*h'(p_l^*)-p_l^*h^2(p_l^*)\right) \\
        \ge & h(p_l^*)\exp \left(-\int_0^{p_l^*} h(t)\mathrm{d}t\right)\left(1 - 2p_l^*h(p_l^*)\right).
    \end{aligned}
    $$
    The last inequality is due to the monotonicity of $w(v) = v - 1 / h(v)$, which means $w'(v) = 1 + h'(v)/h^2(v) \ge 0$. The relationship between $p_l^*$ and $\epsilon$ is given by $G_l(p_l^*, \epsilon) = S(p_l^* + \epsilon) - p_l^*f(p_l^*) = 0$, \textit{i.e.},
    $$
    \exp\left(-\int_{p_l^*}^{p_l^*+\epsilon}h(v)\mathrm{d} v\right) - p_l^*h(p_l^*) = 0.
    $$
    Because $w(v)$ is strictly increasing, then $\forall v \in (p_l^*, p_l^* + \epsilon]$, $w(v) > w(p_l^*)$, which implies that $
    v - 1/h(v) > p_l^* - 1/h(p_l^*)$. Hence,
    $$
    h(v) > \frac{1}{v - p_l^* + 1/h(p_l^*)}.
    $$
    As a result,
    $$
    \int_{p_l^*}^{p_l^* + \epsilon} h(v)\mathrm{d}v > \int_{p_l^*}^{p_l^* + \epsilon} \frac{1}{v - p_l^* + \frac{1}{h(p_l^*)}} \mathrm{d} v = \ln \left(1 + h(p_l^*)\epsilon\right).
    $$
    This implies
    $$
    p_l^*h(p_l^*) = \exp\left(-\int_{p_l^*}^{p_l^*+\epsilon}h(v)\mathrm{d} v\right) < \frac{1}{1 + h(p_l^*)\epsilon},
    $$
    and $h(p_l^*) < \frac{-p_l^* + \sqrt{(p_l^*)^2 + 4p_l^*\epsilon}}{2p_l^*\epsilon}$.
    Let $t(\epsilon) = p_l^*(\epsilon) / \epsilon$. Then when $\epsilon > \epsilon_0$, because of the monotonicity of $p_l^*$ according to \theoremref{thrm:gap-decreasing-lower-price}, $t(\epsilon) < p_l^*(\epsilon_0) / \epsilon_0 = 1 / 2$. Therefore, when $\epsilon > \epsilon_0$,
    $$
    p_l^*h(p_l^*) < \frac{1}{2}\left(-t+\sqrt{t^2+4t}\right) < \frac{1}{2}\left(-\frac{1}{2}+\sqrt{\left(\frac{1}{2}\right)^2+4\cdot\frac{1}{2}}\right) = \frac{1}{2}.
    $$
    Hence,
    $$
    \begin{aligned}
        & f(p_u^*-\epsilon)+(p_u^*-\epsilon)f'(p_u^*-\epsilon) \\
        \ge & h(p_l^*)\exp \left(-\int_0^{p_l^*} h(t)\mathrm{d}t\right)\left(1 - 2p_l^*h(p_l^*)\right) \\
        > & h(p_l^*)\exp \left(-\int_0^{p_l^*} h(t)\mathrm{d}t\right)\left(1 - 2 \cdot \frac{1}{2}\right) = 0,
    \end{aligned}
    $$
    which implies $\frac{\mathrm{d}p_u^*}{\mathrm{d} \epsilon} > 0$ and $p_u^*$ is increasing when $\epsilon > \epsilon_0$. As a result, according to \propositionref{prop:price-and-surplus}, $\CSGap(\epsilon)$ is strictly decreasing when $\epsilon > \epsilon_0$.
\end{proof}

\subsection{Proof of \texorpdfstring{\theoremref{thrm:ratio-monotone-upper-price} and \theoremref{thrm:ratio-monotone-lower-price}}{Theorem}}
\begin{proof}
    We first prove the non-increasing property of $q_l^*$. According to \propositionref{prop:ratio-existence}, $H_l(q_l, \gamma) = 0$ is the implicit function that determines the relationship between $q_l^*$ and $\gamma$. Notice that when $\gamma = 1$, the optimal solution $q_l^*(1)$ and $q_u^*(1)$ should satisfy $q_l^*(1) = q_u^*(1)$ and $S(q_u^*(1)) = q_u^*(1)f(q_u^*(1))$. According to \citep[Theorem 1.3.1]{krantz2012implicit}, the derivative of $q_l^*(\gamma)$ is given by
    $$
    \begin{aligned}
        \frac{\mathrm{d} q_l^*}{\mathrm{d} \gamma} & = - \frac{\frac{\partial H_l}{\mathrm{d} \gamma}\left(q_l^*, \gamma\right)}{\frac{\partial H_l}{\mathrm{d} q_l}\left(q_l^*, \gamma\right)} = \frac{S(\gamma q_l^*) - \gamma q_l^* f(\gamma q_l^*)}{\gamma^2f(\gamma q_l^*) + f(q_l^*) + q_l^* f'(q_l^*)} \\
        & = \frac{S(q_u^*) - q_u^* f(q_u^*)}{\gamma^2f(\gamma q_l^*) + f(q_l^*) + q_l^* f'(q_l^*)}.
    \end{aligned}
    $$
    According to the proof of \propositionref{prop:ratio-existence}, the denominator of the equation above is greater than $0$. Now consider the numerator.
    
    Suppose there exists $\gamma_0 > 1$ such that $q_u^*(\gamma_0) < q_u^*(1)$. Because of the differentiability of $q_u^*(\gamma)$ according to \propositionref{prop:ratio-existence}, there exist a range $[\gamma_1, \gamma_2] \subseteq [1, \gamma_0]$ such that $q_u^*(\gamma)$ is non-increasing when $\gamma \in [\gamma_1, \gamma_2]$ and $q_u^*(\gamma_1) < q_u^*(1)$. Hence, $\forall \gamma \in [\gamma_1, \gamma_2]$, $q_u^*(\gamma) < q_u^*(1)$. Because $w(v) = v - S(v) / f(v)$ is strictly increasing,
    $$
    q_u^*(\gamma) - \frac{S(q_u^*(\gamma))}{f(q_u^*(\gamma))} < q_u^*(1) - \frac{S(q_u^*(1))}{f(q_u^*(1))} = 0,
    $$
    which means $S(q_u^*(\gamma)) - q_u^*(\gamma)f(q_u^*(\gamma)) > 0$ and $\frac{\mathrm{d} q_l^*}{\gamma} > 0$. Hence $q_l^*(\gamma)$ is strictly increasing when $\gamma \in [\gamma_1, \gamma_2]$. Because $q_u^*(\gamma)$ is non-increasing when $\gamma \in [\gamma_1, \gamma_2]$, $\gamma = q_u^*(\gamma) / q_l^*(\gamma)$ is non-increasing when $\gamma \in [\gamma_1, \gamma_2]$, which results in a contradiction. Hence $\forall \gamma \ge 1$, $q_u^*(\gamma) \ge q_u^*(1)$. As a result, $S(q_u^*) - q_u^*f(q_u^*) \le 0$, which implies $q_l^*(\gamma)$ is non-increasing \textit{w.r.t.} $\gamma$.

    Next we prove the non-decreasing property of $q_u^*$. Similarly, the optimal $q_u^*$ is given by
    $$
    q_u^*(\gamma) = \arg \max_{q_u} \PS(q_u / \gamma, q_u).
    $$
    The derivative of $\PS(q_u / \gamma, q_u)$ is
    $$
    H_u(q_l, \gamma) \triangleq \frac{\mathrm{d} \PS}{\mathrm{d} q_u} = S(q_u) - \frac{q_u f(q_u / \gamma)}{\gamma^2}.
    $$
    According to \propositionref{prop:ratio-existence}, $H_u(q_u^*, \gamma)=0$ is the implicit function that determines the relationship between $q_u^*$ and $\gamma$. According to \citep[Theorem 1.3.1]{krantz2012implicit}, the derivative of $q_u^*(\gamma)$ if given by
    $$
    \begin{aligned}
        \frac{\mathrm{d} q_u^*}{\mathrm{d} \gamma} & = - \frac{\frac{\partial H_u}{\mathrm{d} \gamma}\left(q_u^*, \gamma\right)}{\frac{\partial H_u}{\mathrm{d} q_u}\left(q_u^*, \gamma\right)} = \frac{\left(\frac{q_u^*}{\gamma}\right)^2f'\left(\frac{q_u^*}{\gamma}\right)+2\frac{q_u^*}{\gamma} f\left(\frac{q_u^*}{\gamma}\right)}{\gamma^2f(q_u^*)+f\left(\frac{q_u^*}{\gamma}\right)+\frac{q_u^*}{\gamma} f'\left(\frac{q_u^*}{\gamma}\right)} \\
        & = \frac{\left(q_l^*\right)^2f'(q_l^*) + 2q_l^*f(q_l^*)}{\gamma^2f(\gamma q_l^*)+f(q_l^*) + q_l^*f'(q_l^*)}.
    \end{aligned}
    $$
    According to the proof of \propositionref{prop:ratio-existence}, the denominator of the equation above is greater than $0$. Now consider the numerator. Because $w(v) = v - S(v) / f(v)$ is strictly increasing,
    $$
    w'(v) = \frac{2f^2(v) + S(v)f'(v)}{f^2(v)} \ge 0.
    $$
    Hence $f'(q_l^*) \ge -2f^2(q_l^*)/S(q_l^*)$, and
    $$
    \begin{aligned}
        \left(q_l^*\right)^2f'(q_l^*) + 2q_l^*f(q_l^*) \ge \frac{2q_l^*f^2(q_l^*)}{S(q_l^*)}\left(\frac{S(q_l^*)}{f(q_l^*)} - q_l^*\right).
    \end{aligned}
    $$
    Because $q_l^*(\gamma)$ is non-increasing, $q_l^*(\gamma) \le q_l^*(1)$. For the monotonicity of $w(v) = v- S(v) / f(v)$,
    $$
    \frac{S(q_l^*(\gamma))}{f(q_l^*(\gamma))} - q_l^*(\gamma) = -w(q_l^*(\gamma)) \ge -w(q_l^*(1)) = 0.
    $$
    As a result, $\frac{\mathrm{d} q_u^*}{\mathrm{d} \gamma} \ge 0$ and $q_u^*$ is non-decreasing.

    Then We could prove the strict monotonicity of $q_l^*$ and $q_u^*$. Suppose there exists $\gamma_0 > 1$ such that $q_u^*(\gamma_0) = q_u^*(1)$. Because the non-decreasing property, $\forall \gamma \in [1, \gamma_0]$, $q_u^*(\gamma) = q_u^*(1)$. Then $\forall \gamma \in [1, \gamma_0]$, $\mathrm{d} q_l^* / \mathrm{d} \gamma = 0$, which means $q_l^*(\gamma) = q_l^*(1)$. Therefore, $\gamma_0 = q_u^*(\gamma_0) / q_l^*(\gamma_0) = q_u^*(1) / q_l^*(1) = 1$, which results in a contradiction. As a result, $\forall \gamma > 1$, $q_u^*(\gamma) > q_u^*(1)$. Therefore, $\forall \gamma > 1$, $\mathrm{d} q_l^* / \mathrm{d} \gamma < 0$, and $q_l^*(\gamma)$ is strictly decreasing. Similarly, $\forall \gamma > 1$, $q_l^*(\gamma) < q_l^*(1)$, which implies $\mathrm{d} q_u^* / \mathrm{d} \gamma > 0$ and $q_u^*(\gamma)$ is strictly increasing.

    Finally, according to \propositionref{prop:price-and-surplus}, $\CSRatio(\gamma)$ is strictly decreasing and $\TSRatio(\gamma)$ is strictly increasing \textit{w.r.t.} $\gamma$. 
\end{proof}

\subsection{Proof of \texorpdfstring{\theoremref{thrm:gap-decreasing-lower-price}}{Theorem}}
\begin{proof}
    $\forall 0 < \epsilon < U$, according to \propositionref{prop:gap-existence}, $G_l(p_l^*, \epsilon) = 0$ is the implicit function that determines the relationship between $p_l^*$ and $\epsilon$. According to \citep[Theorem 1.3.1]{krantz2012implicit}, the derivative of $p_l^*(\epsilon)$ is given by
    $$
    \frac{\mathrm{d}p_l^*}{\mathrm{d} \epsilon} = -\frac{\frac{\partial G_l}{\partial \epsilon}\left(p_l^*, \epsilon\right)}{\frac{\partial G_l}{\partial p_l}\left(p_l^*, \epsilon\right)} = -\frac{f(p_l^* + \epsilon)}{f(p_l^* + \epsilon) + f(p_l^*) + p_l^*f'(p_l^*)}.
    $$
    According to the proof of \propositionref{prop:gap-existence},
    $$
    f(p_l^* + \epsilon) + f(p_l^*) + p_l^*f'(p_l^*) = -\frac{\partial G_l}{\partial p_l}\left(p_l^*, \epsilon\right) > 0.
    $$
    Hence, $\frac{\mathrm{d}p_l^*}{\mathrm{d} \epsilon} < 0$ and $p_l^*(\epsilon)$ is strictly decreasing. As a result, according to \propositionref{prop:price-and-surplus}, $\TSGap(\epsilon)$ is strictly increasing \textit{w.r.t.} $\epsilon$.
\end{proof}

\subsection{Proof of \texorpdfstring{\theoremref{thrm:gap-vs-ratio}}{Theorem}}
\begin{proof}
    Prove the theorem by contradiction. Suppose $\TSGap(\epsilon) < \TSRatio(\gamma)$. According to \propositionref{prop:price-and-surplus}, we have $q_l^*(\gamma) < p_l^*(\epsilon)$, $q_u^*(\gamma) = p_u^*(\epsilon)$. Hence, let $\gamma' \triangleq p_u^*(\epsilon)/p_l^*(\epsilon) < q_u^*(\gamma)/q_l^*(\gamma) = \gamma$.
    
    On the one hand, by the strict monotonicity of $q_l^*$ and $q_u^*$ suggested by \theoremref{thrm:ratio-monotone-upper-price} and \theoremref{thrm:ratio-monotone-lower-price}, $q_l^*(\gamma') > q_l^*(\gamma)$ and $q_u^*(\gamma') < q_u^*(\gamma)$. According to \propositionref{prop:ratio-existence}, $(q_l^*(\gamma'),q_u^*(\gamma'))$ is the solution to \equationref{eq:ratio-constraint} and is unique, we have $\PS(q_l^*(\gamma'), q_u^*(\gamma')) > \PS(p_l^*(\epsilon), p_u^*(\epsilon))$. On the other hand,
    $$
    q_u^*(\gamma') - q_l^*(\gamma') = (\gamma' - 1)q_l^*(\gamma') < (\gamma' - 1)p_l^*(\epsilon) = p_u^*(\epsilon) - p_l^*(\epsilon) = \epsilon.
    $$
    As a result,
    $$
    \PS(q_l^*(\gamma'), q_u^*(\gamma')) \le \PSGap(q_u^*(\gamma') - q_l^*(\gamma')) \le \PS(p_l^*(\epsilon), p_u^*(\epsilon)),
    $$
    which leads to a contradiction. To conclude, we have $\TSGap(\epsilon) \ge \TSRatio(\gamma)$. In addition, because $\CSGap(\epsilon) = \CSRatio(\gamma)$, we have $\PSGap(\epsilon) \ge \PSRatio(\gamma)$.
\end{proof}

\subsection{Proof of \texorpdfstring{\propositionref{prop:hold-valuation-assumption}}{Proposition}}
\begin{proof}
    The conclusion is obvious for monotone hazard rate distributions. Now if $F$ is $c$-strongly regular, $\tilde{w}(v) = v - \frac{S(v + c)}{f(v + c)}$ is obvious strictly increasing. In addition,
    $$
    \lim_{v \rightarrow U} \tilde{w}(v) = \lim_{v \rightarrow U} \left(v + c - \frac{S(v + c)}{f(v + c)}\right) - c > c - c = 0.
    $$
    Hence $\tilde{F}$ is strongly regular.
\end{proof}

\end{document}